\newcommand{\ie}{{\it i.e.,}\ }
\newcommand{\eg}{{\it e.g.,}\ }
\definecolor{Mathematica1}{rgb}{0.368, 0.507, 0.71}
\definecolor{Mathematica2}{rgb}{0.881, 0.611, 0.142}
\definecolor{Mathematica3}{rgb}{0.560, 0.692, 0.195}
\definecolor{Mathematica4}{rgb}{0.922, 0.386, 0.209}
\definecolor{Mathematica5}{rgb}{0.528, 0.471, 0.701}
\definecolor{yellow(ncs)}{rgb}{1.0, 0.94, 0.0}
\newcolumntype{M}[1]{>{\centering\arraybackslash}m{#1}}
\newcolumntype{P}[1]{>{\centering\arraybackslash}p{#1}}
\newtheorem{proposition}{Proposition}
\newtheorem{lemma}{Lemma}
\DeclareMathOperator{\erfc}{erfc}
\renewcommand\onecolumngrid{
\do@columngrid{one}{\@ne}%
\def\set@footnotewidth{\onecolumngrid}
\def\footnoterule{\kern-6pt\hrule width 1.5in\kern6pt}%
}
\renewcommand\twocolumngrid{
        \def\footnoterule{
        \dimen@\skip\footins\divide\dimen@\thr@@
        \kern-\dimen@\hrule width.5in\kern\dimen@}
        \do@columngrid{mlt}{\tw@}
}%
\def\amsbb{\use@mathgroup \M@U \symAMSb}
\begin{document}
\title{Typical entanglement entropy in systems with particle-number conservation}

\author{Yale Yauk}
\affiliation{School of Mathematics and Statistics, The University of Melbourne, Parkville, Victoria 3010, Australia}
\affiliation{School of Physics, The University of Melbourne, Parkville, Victoria 3010, Australia}
\affiliation{Perimeter Institute for Theoretical Physics, Waterloo, Ontario N2L 2Y5, Canada}
\affiliation{Department of Physics and Astronomy, University of Waterloo, Waterloo, Ontario N2L 3G1, Canada}

\author{Rohit Patil}
\affiliation{Department of Physics, The Pennsylvania State University, University Park, Pennsylvania 16802, USA}

\author{Yicheng Zhang}
\affiliation{Homer L. Dodge Department of Physics and Astronomy,\\
The University of Oklahoma, Norman, Oklahoma 73019, USA}
\affiliation{Center for Quantum Research and Technology, The University of Oklahoma, Norman, Oklahoma 73019, USA}

\author{Marcos Rigol}
\affiliation{Department of Physics, The Pennsylvania State University, University Park, Pennsylvania 16802, USA}

\author{Lucas Hackl}
\affiliation{School of Mathematics and Statistics, The University of Melbourne, Parkville, Victoria 3010, Australia}
\affiliation{School of Physics, The University of Melbourne, Parkville, Victoria 3010, Australia}

\begin{abstract}
We calculate the typical bipartite entanglement entropy $\braket{S_A}_N$ in systems containing indistinguishable particles of any kind as a function of the total particle number $N$, the volume $V$, and the subsystem fraction $f=V_A/V$, where $V_A$ is the volume of the subsystem. We expand our result as a power series $\braket{S_A}_N=a f V+b\sqrt{V}+c+o(1)$, and find that $c$ is universal (\ie independent of the system type), while $a$ and $b$ can be obtained from a generating function characterizing the local Hilbert space dimension. We illustrate the generality of our findings by studying a wide range of different systems, \eg bosons, fermions, spins, and mixtures thereof. We provide evidence that our analytical results describe the entanglement entropy of highly excited eigenstates of quantum-chaotic spin and boson systems, which is distinct from that of integrable counterparts. 
\end{abstract}

\maketitle

\clearpage
\section{Introduction}

Entanglement is widely regarded as one of the most important features of quantum theory. It describes quantum correlations that cannot be explained classically, and has become an important probe for physical properties in many areas of quantum physics. There exist a large number of different entanglement measures and witnesses~\cite{eisert2007quantitative}, of which the bipartite entanglement entropy is the most prominent one with broad applications ranging from quantum information processing~\cite{zheng2000efficient} and characterizing phases of matter~\cite{pollmann2010entanglement} to studying the black hole information paradox~\cite{page1993information} and holography~\cite{ryu2006aspects}. The behavior of the bipartite entanglement entropy of highly excited energy eigenstates has become a widely used probe in quantum many-body systems, including quantum-chaotic interacting models~\cite{mejia_05, santos_12, deutsch_li_13, beugeling_andreanov_15, yang_chamon_15, vidmar2017entanglement, dymarsky2018subsystem, garrison2018does, nakagawa_watanabe_18, liu2018quantum, lu_grover_19, murthy_19, huang2019universal, LeBlond_19, kaneko_iyoda_20, huang_21, Haque_Khaymovich_2022, kliczkowski2023average, nieva2023, patil2023}, integrable interacting models~\cite{alba09, moelter_barthel_14, LeBlond_19, patil2023}, quadratic models~\cite{storms_singh_14, lai_yang_15, nandy_sen_16, VidmarHackl_2017, vidmar2018volume, zhang_vidmar_18, liu2018quantum, HacklVidmar_2019, jafarizadeh_rajabpour_19, lydzba2020eigenstate, lydzba2021entanglement}, and systems with Hilbert space fragmentation~\cite{frey2023probing}. In all of them, the average eigenstate entanglement entropy satisfies a volume law, contrasting the typical area-law~\cite{eisert_colloquium_2010} found in ground states and low-excited states of locally interacting systems.

While over the years the average eigenstate entanglement entropy of physical Hamiltonians has been largely studied numerically, recently, there has been tremendous progress in understanding its behavior for large systems using different classes of (Haar-)random states. This includes general pure states~\cite{Page_1993, vidmar2017entanglement, garrison2018does, Bianchi_2022} and fermionic Gaussian states~\cite{lydzba2020eigenstate, lydzba2021entanglement, bianchi2021page, Bianchi_2022}, both with and without total particle-number conservation. Random matrix theory enabled these analytical calculations. For general pure states, they reproduce the correct leading volume-law term in the average eigenstate entanglement entropy of highly excited eigenstates of quantum-chaotic interacting models (the differences have been found to occur in the $O(1)$ term~\cite{huang_21, Haque_Khaymovich_2022, kliczkowski2023average, nieva2023}). For fermionic Gaussian states, however, they qualitatively reproduce the behavior of the leading volume-law term observed in translationally invariant integrable interacting and quadratic models~\cite{LeBlond_19, VidmarHackl_2017, HacklVidmar_2019}. There is mounting evidence (including evidence provided in this work) that the average eigenstate entanglement entropy enables one to discriminate between quantum-chaotic and integrable interacting systems~\cite{LeBlond_19, Bianchi_2022, patil2023}.

The notion of particles plays an important role in quantum theory. It can be related to an underlying $U(1)$ symmetry of the system. Models with $U(1)$ symmetry, which can also be spin models, can be described using particle-number preserving Hamiltonians. The effect of particle-number conservation on the average entanglement entropy of pure states has been explored before, but the focus has been on systems whose local Hilbert spaces are two-dimensional~\cite{vidmar2017entanglement, garrison2018does, liu2018quantum, bianchi2021page, Bianchi_2022}, which naturally describe spinless fermions, hard-core bosons, and spin-$\frac{1}{2}$ degrees of freedom. From the perspective of a Haar-random state with fixed particle number (or fixed total magnetization) all these systems are equivalent, such that they are all described by the same formulas for the average entanglement entropy and its variance. Consequently, prior to this work, it was not possible to identify which properties of the average entanglement entropy are universal and which depend on the specifics of the local Hilbert space dimension and how.

Our goal in this work is to address those questions in full generality. Therefore, we consider the most general case of a system with total particle-number conservation, for which we compute the typical entanglement entropy as a function of the subsystem size and the particle density. Such a system is purely characterized by the tensor product structure over local sites $\mathcal{H}=\otimes_i\mathcal{H}_i$, and the total particle-number operator $\hat{N}=\sum_i\hat{N}_i$ written as a sum over local number operators. No assumption is made about the Hamiltonian describing the system at that stage. In the second part of our work, we then show evidence that the analytically computed entropy reproduces the leading terms [greater than $O(1)$] in the typical eigenstate entanglement entropy of quantum-chaotic interacting Hamiltonians that commute with $\hat{N}$.

The general setup for the calculations of the entanglement entropy of pure states is as follows. Given a pure state $\ket{\psi}$ in a Hilbert space $\mathcal{H}=\mathcal{H}_A\otimes\mathcal{H}_B$ with subsystems $A$ and $B$, the bipartite entanglement entropy is defined as $S_A(\ket{\psi})=-\Tr (\hat \rho_A\ln \hat \rho_A)$, where $\hat \rho_A=\Tr_B(\ket{\psi}\bra{\psi})$ is the reduced state in subsystem $A$, which is obtained after tracing over $B$. Given a total particle-number operator $\hat{N}$, one can restrict to an eigenspace $\mathcal{H}^{(N)}\subset\mathcal{H}$ of the total particle-number operator $\hat{N}$ and compute the average $\braket{S_A}_N=\int S_A(\ket{\psi})d\mu_{\ket{\psi}}$ with respect to the Haar measure on $\mathcal{H}^{(N)}$.

For a system with local two-dimensional Hilbert space, \ie each site can either be empty or occupied by a single particle (or have an up or down spin-$\frac{1}{2}$), the average entanglement entropy was computed as~\cite{Bianchi_2022,vidmar2017entanglement}
\begin{align}\label{eq:n_lead}
\begin{split}
	\braket{S_A}_{N}&=-[n\ln n+(1-n)\ln(1-n)]\, f V\\
	&\phantom{=}-\sqrt{\frac{n(1-n)}{2\pi}}\left|\ln\left(\frac{1-n}{n}\right)\right|\delta_{f,\frac{1}{2}}\sqrt{V}\\
    &\phantom{=}+\frac{1}{2}\left[f+\ln(1-f)-\delta_{f,\frac{1}{2}}\delta_{n,\frac{1}{2}}\right]+o(1)\,,
\end{split}
\end{align}
where $f=\frac{V_A}{V}$ is the subsystem fraction. The leading order had been found in Ref.~\cite{garrison2018does}. This expression has a number of interesting features, such as the existence of a $\sqrt{V}$ correction at $f=1/2$, the independence of the $O(1)$ term from $n=N/V$ (except at half-filling), and the existence of Kronecker $\delta$s. The latter indicate points of nonuniform convergence, requiring further resolution through double scaling limits. By double scaling we mean that, at those points, otherwise $V$ independent quantities such as $f$ or $n$, or both, are treated as functions of $V$. The specific points of interest are then approached by taking the limit $V\to\infty$. Our goal is to understand which terms in this expression, if any, are universal, and how the nonuniversal terms are modified in systems with larger local Hilbert spaces.

The presentation is organized as follows: In Sec.~\ref{sec:analytical-results}, we derive the main analytical results, \ie the average and variance of the pure-state entanglement entropy. In Sec.~\ref{sec:applications}, we illustrate the generality of these results using simple examples and explain how the methods readily apply to boson, fermion, spin systems, and their mixtures. In Sec.~\ref{sec:physical}, we connect our analytical findings to concrete physical Hamiltonians with local Hilbert space dimensions greater than (the typically studied dimension of) two, namely, the spin-$1$ $XXZ$ model and the Bose-Hubbard model. We provide evidence that our analytical results describe the leading order terms of the typical eigenstate entanglement entropy for those local Hamiltonians when they are quantum chaotic, but not when they are integrable. We conclude in Sec.~\ref{sec:summary} with a summary and discussion of our results.

\section{Analytical results: Average and variance}\label{sec:analytical-results}

In this section we derive our main analytical results, namely, the average entanglement entropy [Eq.~\eqref{eq:entropy_N_asymptotic}], its equivalent with the resolved Kronecker $\delta$ functions that become continuous functions in double scaling limits [Eq.~\eqref{eq:EE-resolved}], and the leading order variance [Eq.~\eqref{eq:variance_asymptotic}]. The variance vanishes exponentially fast as $V$ increases, which means that the formulas obtained for the averages also describe the \emph{typical} entanglement entropy of Haar-random states with fixed particle-number density.

\subsection{Setup: System with fixed particle number}

We consider the general setting of a system with a set $\mathcal{S}=\left\lbrace 1, 2, \dots, V\right\rbrace$ of sites, which could be the sites of a $D$-dimensional hypercubic lattice with linear dimension $L$, in which case $V=L^D$. However, our findings apply to any graph with $V$ edges, so they can be used in the context of irregular (and even fractal) lattices. Each site is described by a local Hilbert space $\mathcal{H}_\text{loc}$ that is isomorphic throughout all sites. It decomposes into a direct sum over the number of $k$ indistinguishable particles that it can hold, so\footnote{For example, a system of spinless fermions would have $\mathcal{H}_{\text{loc}}=\mathcal{H}_{\text{loc}}^{(0)}\oplus \mathcal{H}_{\text{loc}}^{(1)}$, where $\mathcal{H}_{\text{loc}}^{(0)}=\mathrm{span}(\ket{0})$ and $\mathcal{H}_{\text{loc}}^{(1)}=\mathrm{span}(\ket{1})$.}
\begin{equation}
    \mathcal{H}_{\text{loc}}=\bigoplus_{k} \mathcal{H}_{\text{loc}}^{(k)}\,.
\end{equation}
The dimension of the Hilbert space, $a_k=\dim \mathcal{H}_{\text{loc}}^{(k)}$ is a nonnegative integer equal to the number of ways to place $k$ particles at the site (see Fig.~\ref{fig:lattice_sites_diagram}). We will make the rather mild assumption that for large $k$ this dimension scales at most exponentially, \ie
\begin{equation}\label{eq:scaling-assumption}
    a_k=\dim \mathcal{H}_\text{loc}^{(k)}=O(R^{-k})\quad \text{as}\quad k\to\infty\,,
\end{equation}
for some positive constant $0<R\leq 1$. Note that this assumption guarantees that the series $\sum^{\infty}_{k=0}a_kz^k$ has a radius of convergence equal to $R$. In fact, for most physical systems, this sequence of dimensions is finite, such that $a_k=0$ for $k>n_{\max}$, or is bounded from above by a fixed integer. In addition, one typically has $a_0=\dim \mathcal{H}_\text{loc}^{(0)}=1$ corresponding to a unique vacuum (zero particles at a site), but our method can also be used for degenerate vacua where $a_0>1$. In particular, we do not require that the sequence $(a_k)^{\infty}_{k=0}$ converges or it is bounded.

To each site labeled $1\leq i\leq V$, we assign an independent Hilbert space $\mathcal{H}_i$, which is a copy of the model Hilbert space $\mathcal{H}_\text{loc}$. We define the total particle-number operator of the system to be $\hat{N}=\sum_{i=1}^V \hat{N}_i$, where $\hat{N}_i$ has eigenvalue $N_i$ on $\mathcal{H}_i^{(N_i)}$.

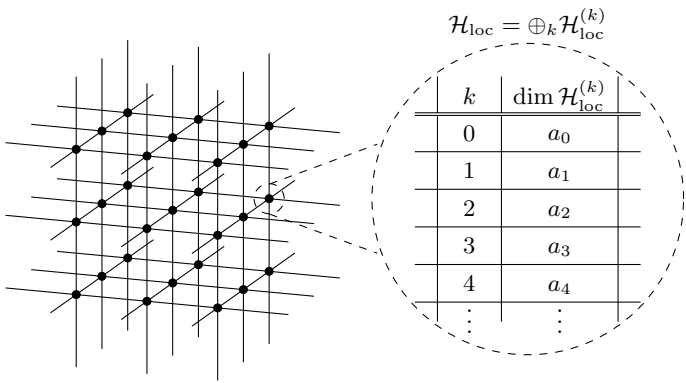
\begin{figure}
    \begin{center}
    \tdplotsetmaincoords{75}{20}
    \begin{tikzpicture}
        [cube/.style={very thick,black}]
    \begin{scope}[tdplot_main_coords]
    \foreach \l in {0,1,2} 
        \foreach \d in {0,1,2} {
        \draw (\l,-1,\d) -- (\l,3,\d);
        \draw (-1,\l,\d) -- (3,\l,\d);
        \draw (\l,\d,-1) -- (\l,\d,3);
    }
    \foreach \x in {0,1,2}
    \foreach \y in {0,1,2}
    \foreach \z in {0,1,2} {
    \filldraw[black,tdplot_main_coords] (\x,\y,\z) circle (1.5pt) node {};
    }
    \coordinate (X) at (2,2,1);
    \end{scope}
    \draw[dashed] (X) circle (0.2);
    \draw[dashed] (2.57,1.47) -- (4,2);
    \draw[dashed] (2.57,1.07) -- (4,0.55);
    \draw[dashed] (6,1.27) circle (2.07);
    \draw (5.65, 2.9) -- (5.65, -0.36);
    \draw (4.8, 2.9) -- (4.8, -0.36);
    \draw (7.2, 2.9) -- (7.2, -0.36);
    \draw[double] (4.5, 2.4) -- (7.5, 2.4); 
    \foreach \i in {0,1,2,3,4} {
    \draw (4.5, 1.9-0.5*\i) -- (7.5, 1.9-0.5*\i);
    }
    \node at (5.225, 2.65) {$k$};
    \node at (6.425, 2.65) {$\dim{\mathcal{H}_\text{loc}^{(k)}}$};
    \node at (6, 3.6) {$\mathcal{H}_\text{loc}=\oplus_{k}\mathcal{H}^{(k)}_{\text{loc}}$}; 
    \foreach \i in {0,...,4} {
    \node at (5.225, 2.15-0.5*\i) {$\i$};
    \node at (6.425, 2.1-0.5*\i) {$a_\i$};
    }
    \node at (5.225, 2.15-2.4) {$\vdots$};
    \node at (6.425, 2.15-2.4) {$\vdots$};
    \end{tikzpicture}
    \end{center}
    \vspace{-0.3cm}
    \caption{{\it Total vs local Hilbert spaces.} Illustration of how the total Hilbert space is constructed from a tensor product of local Hilbert spaces $\mathcal{H}_{\text{loc}}$, which themselves are direct sums over Hilbert spaces $\mathcal{H}^{(k)}_{\text{loc}}$ holding exactly $k$ particles each. The system is fully parameterized by the sequence of dimensions $a_k=\dim\mathcal{H}^{(k)}_{\text{loc}}$.}
    \label{fig:lattice_sites_diagram}
\end{figure}

We can decompose the total Hilbert space as
\begin{equation}
\mathcal{H}=\bigotimes_{i=1}^V\mathcal{H}_i=\bigoplus_{N}\mathcal{H}^{(N)}\,,
\end{equation}
\ie either as a tensor product over all individual sites or as direct sum over Hilbert spaces $\mathcal{H}^{(N)}$, in which we fix the total number of particles $N$. $\mathcal{H}^{(N)}$ is related to each local Hilbert space with fixed particle number by
\begin{equation}
    \mathcal{H}^{(N)}=\bigoplus_{N=\sum_{i=1}^V\! N_i}\,\bigotimes_{i=1}^V \mathcal{H}_i^{(N_i)}\,,
\end{equation}
where the direct sum is over the number of ways to distribute $N$ indistinguishable particles over $V$ distinguishable sites. Its dimension $d_N$ can be calculated as
\begin{equation}
    \label{eq:dN_exact_sum}
    d_N=\dim\mathcal{H}^{(N)}=\sum_{N=\sum_{i=1}^V\! N_i}\prod_{i=1}^V\, \underbrace{\dim\mathcal{H}_i^{(N_i)}}_{=a_{N_i}}\,.
\end{equation}
The exact expression for $d_N$ thus only depends on $V$, $N$, and the sequence $\lbrace a_k\rbrace^\infty_{k=0}$ of local Hilbert space dimensions. If the series truncates at $n_{\max}$, we must have $N\leq V n_{\max}$, as we can place at most $n_{\max}$ particles on each of the $V$ sites.

\subsection{Hilbert space dimension for large $V$}
\label{sec:combinatorics_of_dN}

We begin by introducing the generating function
\begin{equation}\label{eq:generating_function}
\zeta(z)=\sum_{k=0}^\infty a_k z^k\,,
\end{equation}
which is fully determined by the sequence $\lbrace a_k\rbrace^\infty_{k=0}$ of local Hilbert space dimensions. Using $\zeta(z)$, we can evaluate Eq.~\eqref{eq:dN_exact_sum} using the combinatorial identity
\begin{align}
    \left[ \zeta(z) \right]^V=\sum_{N}d_Nz^N\,,
\end{align}
\ie when expanding the l.h.s.~in powers of $z$, the coefficient in front of $z^N$ is exactly the quantity $d_N$ that we are looking for. Using the relation
\begin{align}
    \frac{1}{2\pi i}\oint_\Gamma z^n dz=\delta_{n,-1}\,,
\end{align}
we can extract this coefficient $d_N$ as
\begin{align}
    \label{eq:dN_raw_definition}
    d_N&=\frac{1}{2\pi i} \oint_\Gamma\underbrace{\frac{\left[ \zeta(z) \right]^V}{z^{N}}}_{=e^{V\psi(z)}}\frac{\dd{z}}{z}\,,
\end{align}
where $\Gamma$ is a simple closed contour around the origin. To evaluate $d_N$ asymptotically for large $V$ and fixed $n=N/V$, we use the saddle point approximation by rewriting the integrand as $e^{V\psi(z)}$, where\footnote{Note that $\psi$ has a branch cut along the negative real line (due to the logarithm having a jump of $2\pi i$ there), but since $\psi$ only appears in $e^{V\psi(z)}$ and $V$ is an integer, discontinuous jumps of $2\pi i$ do not affect the integral, because $e^{2\pi i V}=1$.}
\begin{equation}
    \label{eq:psi_zeta_index_definition}
    \psi(z)=\ln[\zeta(z)] -n\ln(z).
\end{equation}
Saddle-points of $\psi$ occur when $\psi'(z)=0$. This is equivalent to solving the equation
\begin{equation}
    \label{eq:psi_derivative_equals_zero}
    z\frac{\zeta'(z)}{\zeta(z)}=n\,.
\end{equation}
In Appendix~\ref{app:saddle_analysis} we show that, among all solutions of this equation, there is a unique positive real number $z_0(n)>0$ with $\psi'[z_0(n)]=0$ if $n>0$. We further show that for $V\to\infty$ this saddle point dominates the contour integral if we deform $\Gamma$ to a contour $\Gamma'$ that passes the real axis perpendicularly right at $z_0(n)$. The saddle point method then yields the asymptotic expansion
\begin{equation}\label{eq:saddle-asymp-sol}
    d_N=\frac{1}{z_0(n)\sqrt{2\pi \psi''[z_0(n)]V}}e^{V \psi[z_0(n)]}+o(1)\,.
\end{equation}
This result shows that no matter the structure of the local Hilbert space, the dimension of the total Hilbert space scales as 
\begin{equation}
    \label{eq:general_saddle_dN}
    d_N=\frac{\alpha(n)}{\sqrt{V}}e^{\beta(n) V}+o(1)\quad\text{for}\quad V\to\infty\,.
\end{equation}
We show in Appendix~\ref{app:saddle-point-analysis} that
\begin{align}\label{eq:solution-saddle}
    \beta(n)=\psi[z_0(n)]\quad\text{and}\quad \alpha(n)=\sqrt{\frac{-\beta''(n)}{2\pi}}\,,
\end{align}
satisfy $\beta(n)>0$ and $\beta''(n)<0$. Therefore, computing $\beta(n)$ fully determines the asymptotics of $d_N$.

In summary, for a system fully described by the sequence $\lbrace a_k\rbrace^\infty_{k=0}$, we can determine the asymptotics of $d_N$ [Eq.~\eqref{eq:general_saddle_dN}] by first finding the unique real solution $z_0(n)>0$ of Eq.~\eqref{eq:psi_derivative_equals_zero} and then computing $\beta(n)=\psi[z_0(n)]$. Even if there is no simple analytical solution, $z_0(n)$ and therefore $\beta(n)$ can be efficiently evaluated numerically. For finite $n_{\max}$, Eq.~\eqref{eq:psi_derivative_equals_zero} is equivalent to finding the unique positive root of a polynomial of degree $n_{\max}$.

\subsection{Average entanglement entropy}

Consider now a bipartition of the system into subsystems $A$ and $B$. Subsystem $A$ has a subset $\mathcal{S}_A\subset \mathcal{S}$ of sites. We denote the number of sites in $A$ as $\abs{\mathcal{S}_A}=V_A$. Subsystem $B$ then has $V-V_A$ sites given by $\mathcal{S}_B=\mathcal{S}\setminus \mathcal{S}_A$. The Hilbert space then decomposes as
\begin{equation}
    \label{eq:Hilbert_N_relation_Hilbert_A_Hilbert_B}
    \mathcal{H}^{(N)}=\bigoplus_{N_A} \mathcal{H}_A^{(N_A)}\otimes \mathcal{H}_B^{(N-N_A)}\,,
\end{equation}
where $N_A$ is the particle number in subsystem $A$. We also define $n_A=N_A/V$ as the particle density in subsystem $A$. It is bounded by
\begin{equation}
    \max[0, n-n_{\max}(1-f)]\leq n_A\leq\min[n, n_{\max}f]\,.
\end{equation}

When we focus on a subsystem of the entire system, only the scale changes. That is, the structures of $\mathcal{H}_A^{(N_A)}$ and $\mathcal{H}_B^{(N-N_A)}$ are similar to that of $\mathcal{H}^{(N)}$, except that one needs to replace the variables $(N, V)\to(N_A, V_A)$ for $A$ and $(N, V)\to(N-N_A, V-V_A)$ for $B$, respectively. In particular, the Hilbert space dimensions of the subsystems $A$ and $B$ can be found using Eq.~\eqref{eq:general_saddle_dN}, along with the changes $(n, V)\to (n_A/f,f V)$ and $(n, V)\to ([n-n_A]/[1-f], [1-f]V)$, respectively, such that
\begin{eqnarray}
    \label{eq:general_saddle_dA_dB}
        d_A&=&\dim\mathcal{H}_A^{(N_A)}=\sqrt{\frac{-\beta''\left(\frac{n_A}{f}\right)}{2\pi fV}}e^{\beta\left(\frac{n_A}{f}\right) fV}+o(1)\,, \nonumber \\
        d_B&=&\dim\mathcal{H}_B^{(N-N_A)} \\ &=& \sqrt{\frac{-\beta''\left(\frac{n-n_A}{1-f}\right)}{2\pi (1-f)V}}e^{\beta\left(\frac{n-n_A}{1-f}\right) (1-f)V}+o(1)\,.\nonumber
\end{eqnarray}

As shown in Ref.~\cite{Bianchi_2019}, the average entanglement entropy for fixed total particle number $N$ is then given by
\begin{align}
\label{eq:entropy_fixed_N_exact_sum}
\begin{split}
\braket{S_A}_N&=\sum_{N_A}\varrho_{N_A}\varphi_{N_A}\quad\text{with}\quad\varrho_{N_A}=\frac{d_Ad_B}{d_N}\,,\\
\varphi_{N_A}&=\Psi(d_N\!+\!1)-\Psi(\max[d_A,d_B]\!+\!1)\\
&\phantom{=}-\min\left[\frac{d_A-1}{2d_B},\frac{d_B-1}{2d_A}\right]\,,
\end{split}
\end{align}
where $\Psi(x)=\frac{\Gamma'(x)}{\Gamma(x)}=\dv{x} \ln\left[\Gamma(x)\right]$ is the digamma function.

While one can efficiently compute the sum in Eq.~\eqref{eq:entropy_fixed_N_exact_sum} numerically, we are more interested in its asymptotic behavior in the thermodynamic limit. To find it, it is useful to notice that the prefactor $\varrho_{N_A}=d_Ad_B/d_N$ is a probability distribution ($\sum_{N_A}\varrho_{N_A}=1$) that can be well approximated by a continuous Gaussian distribution $\varrho(n_A)$ in the rescaled variable $n_A=N_A/V$, such that $\varrho_{N_A}=V\varrho(n_A)+o(1)$. Similarly, we can define the continuous function $\varphi(n_A)=\varphi_{n_AV}$ with a ``kink'' at $n_A=n_{\mathrm{crit}}$, where $n_{\mathrm{crit}}$ is defined as the point $n_A=N_A/V$ with $d_A(N_A)=d_B(N-N_A)$. Together, this allows us to approximate the sum by a continuous integral
\begin{equation}
\label{eq:entropy_N_integral}
    \hspace{-2mm}\braket{S_A}_N=\sum_{N_A}\varrho_{N_A}\varphi_{N_A}=\int \varrho(n_A)\varphi(n_A)dn_A+o(1)\,.
\end{equation}
As shown in Ref.~\cite{supp},
\begin{eqnarray}
\varrho(n_A)&=&\sqrt{\frac{V|\beta''(n)|}{2\pi f(1-f)}}\exp\left[-\frac{V}{2}\frac{|\beta''(n)|(n_A-fn)^2}{f(1-f)}\right]\,,\nonumber \\ \label{eq:varrho_final}\\ 
    \varphi(n_A)&=&V\left[\beta(n)-(1-f)\beta\left(\frac{n-n_A}{1-f}\right)\right]\nonumber \\
    &&-\frac{1}{2}\delta_{f,\frac{1}{2}}\delta_{n,n^*}\exp\left[-\frac{4V|\beta^{(3)}(n^*)|}{3}\left|n_A-\frac{n^*}{2}\right|^3\right] \nonumber \\
    &&+\frac{1}{2}\ln\left[\frac{(1-f)\beta''(n)}{\beta''\left(\frac{n-n_A}{1-f}\right)}\right]+o(1)\,.\label{eq:varphi-expand}
\end{eqnarray}
for $f\leq 1/2$. We see that $\varrho(n_A)$ describes a Gaussian distribution with mean $\bar{n}_A=f n$ and variance $\sigma^2=f(1-f)/(\abs{\beta''(n)}V)$. The expression for $\varphi(n_A)$ is only valid for $n_A\leq n_{\mathrm{crit}}$. For $n_A\geq n_{\mathrm{crit}}$, we need to replace $n_A\to n-n_A$ and $f\to 1-f$. A subtlety arises from the term $\min[\tfrac{d_A-1}{2d_B},\tfrac{d_B-1}{2d_A}]$ in Eq.~\eqref{eq:entropy_fixed_N_exact_sum}, which gives rise to the Kronecker $\delta$ term in Eq.~\eqref{eq:varphi-expand}. This term is nonzero only if $n=n^*$, where $n^*$ is the point with $\beta'(n^*)=0$.

In the limit of large $V$, the Gaussian in Eq.~\eqref{eq:varrho_final} narrows since the standard deviation scales as $\sigma\sim1/\sqrt{V}$. Therefore, to calculate the integral to $O(1)$ in $V$, it suffices to Taylor expand $\varphi(n_A)$ up to quadratic order around the mean $\bar{n}_A=fn$. As discussed in Appendix~\ref{app:resolution_kronecker_delta}, the case $f=1/2$ is special, since the kink of $\varphi(n_A)$ lies exactly at the mean $\bar{n}_A$, \ie $n_{\mathrm{crit}}=\bar{n}_A$. Therefore, we need to integrate the regions $n_A\leq n_{\mathrm{crit}}$ and $n_A\geq n_{\mathrm{crit}}$ separately against the respective expressions of $\varphi(n_A)$, which produces a term proportional to $\sqrt{V}$. The integral involving the Kronecker $\delta$ term in Eq.~\eqref{eq:varphi-expand} is trivial, as the term is effectively constant around the peak of the Gaussian $\varrho(n_A)$, so the integral yields $-\frac{1}{2}\delta_{f,\frac{1}{2}}\delta_{n,n^*}$.

We now have all the ingredients to evaluate the integral in Eq.~\eqref{eq:entropy_N_integral}. Using the known moments of the Gaussian distribution to simplify the integral, we obtain~\cite{supp}
\begin{align}
    \label{eq:entropy_N_asymptotic}
    \begin{split}
        \braket{S_A}_N&=\beta(n) fV-\frac{\abs{\beta'(n)}}{\sqrt{2\pi\abs{\beta''(n)}}}\sqrt{V}\delta_{f,\frac{1}{2}}\\
        &\phantom{=}+\frac{1}{2}\left[f+\ln(1-f)-\delta_{f,\frac{1}{2}}\delta_{n,n^*}\right]+o(1)\,,
    \end{split}
\end{align}
valid for $f\leq1/2$ and with a unique $n^*>0$ computed from $\beta'(n^*)=0$.

It is remarkable that while the leading volume-law term depends on the exponential scaling of the dimension $d_N$, away from $f=1/2$ the correction is a universal $O(1)$ function of $f$, $[f+\ln(1-f)]/2$. For two-dimensional local Hilbert spaces, this $O(1)$ term was obtained in Ref.~\cite{vidmar2017entanglement} within a ``mean-field'' calculation. At $f=1/2$, there is an extra $O(1)$ term that has a universal $1/2$ prefactor, and depends on the specifics of the system being considered only through the Kronecker $\delta$ at $n=n^*$. We also find that, at $f=1/2$, the $\sqrt{V}$ correction (identified in Ref.~\cite{vidmar2017entanglement} in the context of two-dimensional local Hilbert spaces) appears generically and vanishes only at $n=n^*$ due to $\beta'(n^*)=0$. Hence, independent of the details of the system, we establish that the two terms in $\braket{S_A}_N$ that contain Kronecker $\delta$s are mutually exclusive. Another key finding of our work is that knowledge of the leading order term as a function of $n$ allows one, in principle, to calculate all terms up to $O(1)$ in $V$. Those are the nonvanishing terms in the thermodynamic limit, and are fully determined by $\beta(n)$.

\subsection{Resolving Kronecker $\delta$s}

The average entanglement entropy $\braket{S_A}_N$ as computed in Eq.~\eqref{eq:entropy_N_asymptotic} has the interesting feature that it contains Kronecker $\delta$s with respect to the continuous (in the thermodynamic limit) variables $f$ and $n$. This means that the respective expansion coefficients $b$ and $c$ (introduced in the abstract) containing these Kronecker $\delta$s do not converge uniformly to a real-valued function. Something interesting happens in the neighborhood of $f=1/2$ (for $b$) and at the point with $f=1/2$ and $n=n^*$ (for $c$). We can \emph{resolve} these Kronecker $\delta$s by considering a double scaling limit, in which $f\leq 1/2$ and $n$ are not assumed to be a fixed real value, but rather have their own scaling in $V$ (around $f=1/2$ and $n=n^*$, if $n^*$ is finite),
\begin{widetext}
\begin{align}\label{eq:EE-resolved}
    \braket{S_A}_N&=\beta(n) f V+\frac{1}{2}\left[f+\ln(1-f)\right]\\
    &\quad +V|f-\tfrac{1}{2}| \beta(n) \erfc\!\left[\sqrt{2V\abs{\beta''(n)}} \frac{|f-\frac{1}{2}|\beta(n)}{\abs{\beta'(n)}}\right]\! - \abs{\beta'(n)}\sqrt{\frac{V}{2\pi\abs{\beta''(n)}}} \exp\!\left[-2V\abs{\beta''(n)} \frac{\left(f-\frac{1}{2}\right)^2\beta(n)^2}{\abs{\beta'(n)}^2}\right]\nonumber\\
    &\quad -\frac{1}{4}\exp\left[\frac{(n-n^*)^2}{2}V\abs{\beta''(n^*)}\right]\left(e^{2\left(f-\frac{1}{2}\right)V\beta(n^*)}\erfc\left[\sqrt{\frac{\abs{\beta''(n^*)}V}{2}}\frac{(n-n^*)^2\abs{\beta''(n^*)}+2\left(f-\frac{1}{2}\right)\beta(n^*)}{\abs{(n-n^*)\beta''(n^*)}}\right] \right.\nonumber \\
    &\quad\left.\hspace{4.5cm}+e^{-2\left(f-\frac{1}{2}\right)V\beta(n^*)}\erfc\left[\sqrt{\frac{\abs{\beta''(n^*)}V}{2}}\frac{(n-n^*)^2\abs{\beta''(n^*)}-2\left(f-\frac{1}{2}\right)\beta(n^*)}{\abs{(n-n^*)\beta''(n^*)}}\right] \right)\,, \nonumber
\end{align}
\end{widetext}
where the second line resolves the Kronecker $\delta_{f,\frac{1}{2}}$ of order $\sqrt{V}$, while the third and fourth lines resolve the Kronecker $\delta_{f,\frac{1}{2}}\delta_{n,n^*}$ of $O(1)$.

This formula can also be used when approximating the Page curve as a whole, \ie when plotting $\braket{S_A}_N$ as a function of $V_A=1,\dots,V$ for large $V$ and finite $n=N/V$. The reason is that, in such a plot, we will always have $V_A$ near $V/2$, which corresponds to the double scaling limit $f-1/2=O(1/V)$.

In Appendix~\ref{app:resolution_kronecker_delta}, we analyze in detail the different double scaling limits $f=1/2+\Lambda_f/V^s$ and $n=n^*+\Lambda_n/V^t$, and find closed resolved expressions involving $\Lambda_f$ and $\Lambda_n$ around these critical points.

\subsection{Variance}

The variance $\left(\Delta S_A\right)^2_N\equiv\braket{S^{\,2}_A}_N-\braket{S_A}_N^2$ of the entanglement entropy for fixed $N$ is~\cite{Bianchi_2019}
    \begin{equation}
    \label{eq:variance_sum_definition}
    \left(\Delta S_A\right)^2_N=\frac{1}{d_N+1}\left[\sum_{N_A}\varrho_{N_A}\left(\varphi_{N_A}^2+\chi_{N_A}\right)-\braket{S_A}_N^2\right]\,,
    \end{equation}
with $\braket{S_A}_N$, $\varrho_{N_A}$, and $\varphi_{N_A}$ defined in Eq.~\eqref{eq:entropy_fixed_N_exact_sum}, and 
\begin{equation}
    \label{eq:chi_definition}
    \chi_{N_A}=\begin{cases}
        (d_A+d_B)\Psi'(d_B+1) & \\
        \,\,-(d_N+1)\Psi'(d_N+1)& d_A\leq d_B\\
        \,\,-\frac{(d_A-1)(d_A+2d_B-1)}{4d_B^2} &   \\[2mm]
        (d_A+d_B)\Psi'(d_A+1) & \\
        \,\,-(d_N+1)\Psi'(d_N+1)& d_A> d_B\\
        \,\,-\frac{(d_B-1)(d_B+2d_A-1)}{4d_A^2} & 
    \end{cases},
\end{equation}
where the function $\Psi'(x)=\dv[2]{x} \ln[\Gamma(x)]$ is the derivative of the digamma function. To shorten our equations, in what follows we drop the $n_A$ dependence of $\varrho$ and $\varphi$ and the differential $\dd{n_A}$, \eg $\int \varrho$ is understood to be $\int \varrho(n_A)\dd{n_A}$.

The last term $\braket{S_A}_N^2$ in Eq.~\eqref{eq:variance_sum_definition} is just the squared average, which we have calculated in Eq.~\eqref{eq:entropy_N_asymptotic}. The sum $\sum_{N_A=0}^N \varrho_{N_A}\varphi_{N_A}^2$ is evaluated with the methods of the previous section and yields
\begin{equation}
    \begin{split}
        \int \varrho\,\varphi^2&=\left(f[f+\ln(1-f)]\beta(n)+\frac{f(1-f)\beta'(n)^2}{\abs{\beta''(n)}}\right)V\\ 
        &\phantom{={}} +\beta(n)^2f^2V^2-\frac{\beta(n)\abs{\beta'(n)}}{\sqrt{2\pi\abs{\beta''(n)}}}V^{\frac{3}{2}}\delta_{f,\frac{1}{2}}+o(V)\,.
    \end{split}
\end{equation}
Then it follows that 
\begin{equation}
    \int \!\! \varrho\, \varphi^2 - \left(\int \!\! \varrho\,\varphi\right)\!^2\!= \! \left[f(1-f)- \! \frac{1}{2\pi}\delta_{f,\frac{1}{2}}\right] \! \frac{\beta'(n)^2}{\abs{\beta''(n)}}V+o(V).
\end{equation}
The behavior of $\chi_{N_A}$ as $V$ tends to infinity is obvious once we use the expansion $\Psi'(V)=1/V+O(1/V^2)$ in Eq.~\eqref{eq:chi_definition}. One can show that
\begin{equation}
    \chi_{N_A}=\begin{cases}
        \frac{d_A}{2d_B}+O(\frac{1}{d_B^2})\,, & d_A<d_B \\
        \frac{1}{4}+o(1)\,, & d_A=d_B \\ 
        \frac{d_B}{2d_A}+O(\frac{1}{d_A^2})\,, & d_A>d_B
    \end{cases}.
\end{equation}
Therefore, the term $\sum_{N_A=0}^N\varrho_{N_A}\chi_{N_A}$ vanishes unless $d_A=d_B$, which occurs for all $n_A$ only at $f=1/2$ and $n=n^*$, as discussed in Ref.~\cite{supp}. Then
\begin{equation}
    \sum_{N_A=0}^N \varrho_{N_A}\chi_{N_A}=\frac{1}{4}\delta_{f,\frac{1}{2}}\delta_{n,n^*}+o(1)\,.
\end{equation}
In fact, from Eq. \eqref{eq:variance_sum_definition}, we see that 
\begin{equation}
    \left(\Delta S_A\right)_N^2=\frac{1}{d_N\!+\!1}\left[f(1\!-\!f)\!-\!\frac{1}{2\pi}\delta_{f,\frac{1}{2}}\right]\frac{\beta'(n)^2}{\abs{\beta''(n)}}V+o(V).
\end{equation}
Plugging in the asymptotic form of $d_N$ from Eq.~\eqref{eq:general_saddle_dN}, we find that~\cite{supp}
\begin{align}
    \label{eq:variance_asymptotic}
    \left(\Delta S_A\right)_N^2&=\frac{\sqrt{2\pi}\beta'(n)^2}{\abs{\beta''(n)}^{\frac{3}{2}}}\left[f(1-f)-\frac{1}{2\pi}\delta_{f,\frac{1}{2}}\right]V^{\frac{3}{2}}e^{-\beta(n)V} \nonumber \\
    &\phantom{={}} +o(V^{\frac{3}{2}}e^{-\beta(n)V})\,.
\end{align}
We see that the variance vanishes in the thermodynamic limit, thus the average entanglement entropy is typical. That is, we expect the overwhelming majority of the quantum states with $N$ particles to have the entanglement entropy predicted by Eq.~\eqref{eq:entropy_N_asymptotic}.

\section{Applications}\label{sec:applications}

We illustrate the generality and elegance of our results by discussing various applications. Specifically, we consider five examples that showcase the full range of possible behaviors of the integer sequence $\{a_k\}^\infty_{k=0}$. We also explain how spin-$j$ systems with fixed $S_z$ can naturally be described within our framework, and how one can straightforwardly describe composite systems by combining different particle species.

\subsection{Five examples}

To illustrate how $\braket{S_A}_{N}$ depends on the particle density $n=N/V$ for different systems, we must specify the coefficients $a_k$ that fully characterize each system. This is usually straightforward. The following task is to determine $\beta(n)$, from which the entanglement entropy can be calculated using Eq.~\eqref{eq:entropy_N_asymptotic}. The outcomes of these steps are shown in Fig.~\ref{fig:bucket_numbers_common_systems} for five illustrative examples, which for simplicity involve spinless fermions and bosons.

\textbf{(a) Spinless fermions.} In this case no more than one particle may be placed at each site due to Pauli's exclusion principle. That is, $n_{\max}=1$ and either the site is filled or not. Thus
\begin{equation}
    a_k=\begin{cases}
        1 & k=0, 1 \\
        0 & \text{otherwise}
    \end{cases}.
\end{equation}
This case can be used to describe all local two-level quantum systems, including hard-core bosons and spin-$1/2$ systems. The vast majority of previous works on the typical entanglement entropy of pure states focused on this case (see Ref.~\cite{Bianchi_2022} for a review).

\textbf{(b) Two-species hard-core bosons.} Single-species hard-core bosons behave like spinless fermions in position space in that no two hard-core bosons can be placed at the same site. One can extend the hard-core constraint to two-species hard-core bosons, which is the example we have in mind here. For two species of hard-core bosons, though particles are still indistinguishable within each species, there are now two ways to place one hard-core boson on a site (one per species), and no more than a single hard-core boson can be placed on a site, so
\begin{equation}
    a_k=\begin{cases}
        1, & k=0 \\
        2, & k=1 \\ 
        0, & \text{otherwise}
    \end{cases}\,.
\end{equation}
In contrast to this case, for two species of fermions (\eg spin-$1/2$ fermions) one can place two fermions of different species in a site. Of course, if the two species of fermions have an infinite repulsion between them, which produces an effective hard-core constraint, then the results would be identical to those for two-species hard-core bosons.

\begin{figure*}
    \makebox[\textwidth][c]{
        \tdplotsetmaincoords{75}{20}
        \tikzmath{\scale = 0.6;}
        \hspace{-10mm}
        \begin{tikzpicture}[scale=.97]
            \coordinate (f) at (2.8,1.1);
            \coordinate (q) at (5.2,1.3);
            \coordinate (b) at (7.8,1.5);
            \coordinate (tbu) at (10.6,2);
            \coordinate (tbo) at (14,2.2);
            \begin{scope}[tdplot_main_coords]
            \foreach \l in {0, \scale, 2*\scale} 
            \foreach \d in {0, \scale , 2*\scale} {
            \draw (\l,-\scale,\d) -- (\l,3*\scale,\d);
            \draw (-\scale,\l,\d) -- (3*\scale,\l,\d);
            \draw (\l,\d,-\scale) -- (\l,\d,3*\scale);
            }
            \foreach \x in {0, \scale, 2*\scale} 
            \foreach \y in {0, \scale, 2*\scale} 
            \foreach \z in {0, \scale, 2*\scale}  {
            \filldraw[black,tdplot_main_coords] (\x,\y,\z) circle (1.2pt) node {};
            }
            \coordinate (X) at (2*\scale,2*\scale,\scale);
            \end{scope}
            \draw[dashed] (X) circle (0.14);
            \draw[dashed] ($(X)+(0,0.14)$)--($(X)+(1.2,0.6)$);
            \draw[dashed] ($(X)+(0,-0.14)$)--($(X)+(1.2,-0.6)$);
            \draw[dashed,fill=Mathematica5!30] ($(tbo)+(0.9,-1.45)$) circle (2.2);
            \draw[dashed,fill=Mathematica4!30] ($(tbu)+(0.9,-1.25)$) circle (2.15);
            \draw[dashed,fill=Mathematica3!30] ($(b)+(0.9,-0.7)$) circle (1.6);
            \draw[dashed,fill=Mathematica2!30] ($(q)+(0.9,-0.5)$) circle (1.5);
            \draw[dashed,fill=Mathematica1!30] ($(f)+(0.9,-0.4)$) circle (1.35);
            \draw (f) rectangle ($(f)+(0.8, 0.4)$);
            \node at ($(f)+(1.4, 0.2)$) {$a_0=1$};
            \draw ($(f)+(0,-0.6)$) rectangle ($(f)+(0.8, -0.2)$);
            \draw[blue, fill=blue!70] ($(f)+(0.4, -0.4)$) circle (3pt);
            \node at ($(f)+(1.4, -0.4)$) {$a_1=1$};
            \node at ($(f)+(0.4,-0.9)$) {$\vdots$};
            \node at ($(f)+(1.4,-1)$) {$a_k=0$};
            \draw (b) rectangle ($(b)+(0.8, 0.4)$);
            \node at ($(b)+(1.4, 0.2)$) {$a_0=1$};
            \draw ($(b)+(0,-0.6)$) rectangle ($(b)+(0.8, -0.2)$);
            \draw[blue, fill=blue!70] ($(b)+(0.4, -0.4)$) circle (3pt);
            \node at ($(b)+(1.4, -0.4)$) {$a_1=1$};
            \draw ($(b)+(0,-1.2)$) rectangle ($(b)+(0.8, -0.8)$);
            \draw[blue, fill=blue!70] ($(b)+(0.225, -1)$) circle (3pt);
            \draw[blue, fill=blue!70] ($(b)+(0.575, -1)$) circle (3pt);
            \node at ($(b)+(1.4, -1)$) {$a_2=1$};
            \node at ($(b)+(0.4,-1.5)$) {$\vdots$};
            \node at ($(b)+(1.4,-1.6)$) {$a_k=1$};  
            \draw (tbu) rectangle ($(tbu)+(0.8, 0.4)$);
            \node at ($(tbu)+(1.4, 0.2)$) {$a_0=1$};
            \draw ($(tbu)+(0,-0.6)$) rectangle ($(tbu)+(0.8, -0.2)$);
            \draw[blue, fill=blue!70] ($(tbu)+(0.4, -0.4)$) circle (3pt);
            \draw ($(tbu)+(0,-1)$) rectangle ($(tbu)+(0.8, -0.6)$);
            \draw[orange, fill=orange!70] ($(tbu)+(0.4, -0.8)$) circle (3pt);
            \node at ($(tbu)+(1.4, -0.6)$) {$a_1=2$};
            \draw ($(tbu)+(0,-1.6)$) rectangle ($(tbu)+(0.8, -1.2)$);
            \draw[blue, fill=blue!70] ($(tbu)+(0.225, -1.4)$) circle (3pt);
            \draw[blue, fill=blue!70] ($(tbu)+(0.575, -1.4)$) circle (3pt);
            \draw ($(tbu)+(0,-2)$) rectangle ($(tbu)+(0.8, -1.6)$);
            \draw[blue, fill=blue!70] ($(tbu)+(0.225, -1.8)$) circle (3pt);
            \draw[orange, fill=orange!70] ($(tbu)+(0.575, -1.8)$) circle (3pt);
            \draw ($(tbu)+(0,-2.4)$) rectangle ($(tbu)+(0.8, -2)$);
            \draw[orange, fill=orange!70] ($(tbu)+(0.225, -2.2)$) circle (3pt);
            \draw[orange, fill=orange!70] ($(tbu)+(0.575, -2.2)$) circle (3pt);
            \node at ($(tbu)+(1.4, -1.8)$) {$a_2=3$};
            \node at ($(tbu)+(0.4,-2.7)$) {$\vdots$};
            \node at ($(tbu)+(1.4,-2.8)$) {$a_k=k+1$};
            \draw (tbo) rectangle ($(tbo)+(0.8, 0.4)$);
            \node at ($(tbo)+(1.4, 0.2)$) {$a_0=1$};
            \draw ($(tbo)+(0,-0.6)$) rectangle ($(tbo)+(0.8, -0.2)$);
            \draw[blue, fill=blue!70] ($(tbo)+(0.4, -0.4)$) circle (3pt);
            \draw ($(tbo)+(0,-1)$) rectangle ($(tbo)+(0.8, -0.6)$);
            \draw[orange, fill=orange!70] ($(tbo)+(0.4, -0.8)$) circle (3pt);
            \node at ($(tbo)+(1.4, -0.6)$) {$a_1=2$};
            \draw ($(tbo)+(0,-1.6)$) rectangle ($(tbo)+(0.8, -1.2)$);
            \draw[blue, fill=blue!70] ($(tbo)+(0.225, -1.4)$) circle (3pt);
            \draw[blue, fill=blue!70] ($(tbo)+(0.575, -1.4)$) circle (3pt);
            \draw ($(tbo)+(0,-2)$) rectangle ($(tbo)+(0.8, -1.6)$);
            \draw[blue, fill=blue!70] ($(tbo)+(0.225, -1.8)$) circle (3pt);
            \draw[orange, fill=orange!70] ($(tbo)+(0.575, -1.8)$) circle (3pt);
            \draw ($(tbo)+(0,-2.4)$) rectangle ($(tbo)+(0.8, -2)$);
            \draw[orange, fill=orange!70] ($(tbo)+(0.225, -2.2)$) circle (3pt);
            \draw[blue, fill=blue!70] ($(tbo)+(0.575, -2.2)$) circle (3pt);
            \draw ($(tbo)+(0,-2.8)$) rectangle ($(tbo)+(0.8,-2.4)$);
            \draw[orange, fill=orange!70] ($(tbo)+(0.225, -2.6)$) circle (3pt);
            \draw[orange, fill=orange!70] ($(tbo)+(0.575, -2.6)$) circle (3pt);
            \node at ($(tbo)+(1.4, -2)$) {$a_2=4$};
            \node at ($(tbo)+(0.4,-3.1)$) {$\vdots$};
            \node at ($(tbo)+(1.4,-3.2)$) {$a_k=2^k$};
            \draw (q) rectangle ($(q)+(0.8, 0.4)$);
            \node at ($(q)+(1.4, 0.2)$) {$a_0=1$};
            \draw ($(q)+(0,-0.6)$) rectangle ($(q)+(0.8, -0.2)$);
            \draw[blue, fill=blue!70] ($(q)+(0.4, -0.4)$) circle (3pt);
            \draw ($(q)+(0,-1)$) rectangle ($(q)+(0.8, -0.6)$);
            \draw[orange, fill=orange!70] ($(q)+(0.4, -0.8)$) circle (3pt);
            \node at ($(q)+(1.4, -0.6)$) {$a_1=2$};
            \node at ($(q)+(0.4,-1.3)$) {$\vdots$};
            \node at ($(q)+(1.4,-1.4)$) {$a_k=0$};
            \node at ($(f)+(0.7,-2.91)$) {(a) Fermions};
            \node[align=center] at ($(q)+(0.9,-3.3)$) {(b) Hardcore bosons\\ (2 species)};
            \node[align=center] at ($(b)+(0.9,-3.5)$) {(c) Bosons\\ (1 species)};
            \node[align=center] at ($(tbu)+(0.9,-4)$) {(d) Bosons \\ (2 species, unordered)};
            \node[align=center] at ($(tbo)+(0.9,-4.2)$) {(e) Bosons \\ (2 species, ordered)};
            \coordinate (label) at (0,-1.5);
            \draw[blue, fill=blue!70] (label) circle (3pt);
            \node[right] at (label) {\ $=$ species $X$};
            \draw[orange, fill=orange!70] ($(label)+(0,-0.5)$) circle (3pt);
            \node[right] at ($(label)+(0,-0.5)$) {\ $=$ species $Y$};
            \node at (1.3,2.6) {\Large{Five examples}};
        \coordinate (graph) at (1, -4.5);
        \begingroup 
        \def\arraystretch{1.8}
        \node at ($(graph)+(4.3,-4)$) {
            \begin{tabular}{M{.5cm}|M{3.5cm}|M{1cm}|M{6cm}|M{0.5cm}}
         & $\zeta(z)$ & $z_0(n)$ & $\beta(n)$ & $n^*$ \\
        \hhline{=|=|=|=|=}
        (a) & $1+z$ & $\frac{n}{1-n}$ & $(n-1)\ln(1-n)-n\ln(n)$ & $\frac{1}{2}$\\
        (b) & $1+2z$ & $\frac{n}{2(1-n)}$ & $(n-1)\ln(1-n)-n\ln(n)+n\ln(2)$ & $\frac{2}{3}$ \\
        (c) & $\sum_{k=0}^\infty z^k=\frac{1}{1-z}$ & $\frac{n}{1+n}$ & $(n+1)\ln(1+n)-n\ln(n)$ & - \\
        (d) & $\sum_{k=0}^\infty(k+1)z^k=\frac{1}{(1-z)^2}$ & $\frac{n}{2+n}$ & $(n+2)\ln(2+n)-n\ln(n)-\ln(4)$ & - \\
        (e) & $\sum_{k=0}^\infty 2^kz^k=\frac{1}{1-2z}$ & $\frac{n}{2(1+n)}$ & $(n+1)\ln(1+n)-n\ln(n)+n\ln(2)$ & -  \\
        \end{tabular}
        };
        \endgroup
        \node at ($(graph)+(13.3,-4)$) {\includegraphics[width=.28\textwidth]{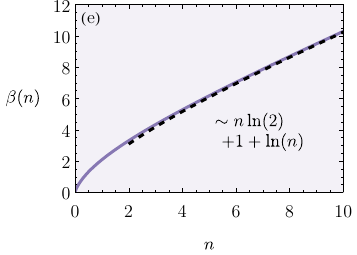}};
        \node at (graph) {\includegraphics[width=.28\textwidth]{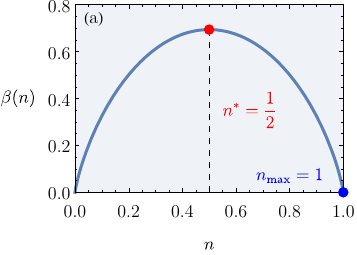}};
        \node at ($(graph)+(4.7,0)$) {\includegraphics[width=.28\textwidth]{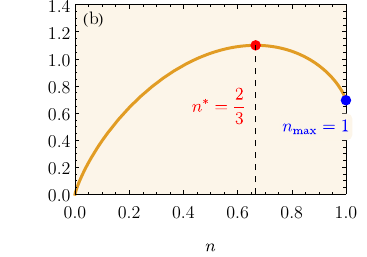}};
        \node at ($(graph)+(9.2,0)$) {\includegraphics[width=.28\textwidth]{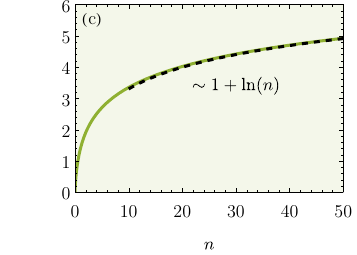}};
        \node at ($(graph)+(13.7,0)$){\includegraphics[width=.28\textwidth]{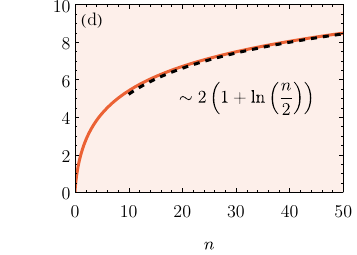}};
        \end{tikzpicture}
    }
    \caption{{\it Five illustrative examples.} Sequence $\left\{a_k\right\}_{k=0}^\infty$ for five examples involving spinless particles. Each connected block lists the possible arrangements of $k$ particles at a site (shown up to $k=2$). In the table, we list the generating function $\zeta(z)$, saddle point $z_0(n)$, and the exponential scaling coefficient $\beta(n)$. The latter is the volume-law coefficient $a$ [see Eq.~\eqref{eq:entropy_N_asymptotic}]. For each example, we plot the function $\beta(n)$ and label the points $n^*$ and $n_{\max}$, when they exist.}
    \label{fig:bucket_numbers_common_systems}
\end{figure*}

\textbf{(c) Bosons.} An arbitrary number of bosons can be placed on a site, and for single-species bosons we have
\begin{equation}
    a_k=1,\quad k\in\amsbb{N}\,.
\end{equation}
This is the first case considered in this work in which the local Hilbert space $\mathcal{H}_{\mathrm{loc}}$ is infinite dimensional, \ie in which $n_{\max}=\infty$. Consequently, the particle density $n$ can grow without bound and, as a result, the coefficient of the volume law can diverge (logarithmically) with $n$.

\textbf{(d) Two-species bosons.} Say we have two species $X$ and $Y$ of indistinguishable bosons. If a site contains $k$ bosons, it could have $0$ through $k$ bosons of type $X$. The number of $Y$ bosons would then be $k$ minus the number of $X$ bosons, which implies that
\begin{equation}
    a_k=k+1, \quad k\in\amsbb{N}\,.
\end{equation}
While the behavior of the integer sequence $\{a_k\}^\infty_{k=0}$ is different from case (c), we note that in both cases the coefficient $\beta(n)$ of the volume in the entanglement entropy is proportional to $ \ln n$ for large $n$.

\textbf{(e) Two-species bosons with ordering.} The difference with (d) is that in this case we care about the ordering of the bosons. For a site with $k$ bosons, there are a total of $2^k$ different ways in which we can place the $X$ and $Y$ bosons. Hence,
\begin{equation}
    a_k=2^k, \quad k\in\amsbb{N}\,.
\end{equation}
One could think of this example as describing a system in which there is an infinite number of internally ordered levels at each site, each of which can either be occupied by an $X$ or a $Y$ boson. This is a rather exotic example that illustrates how difficult it is to encounter systems that match the fastest (exponential) growth of $a_k$ allowed by our assumption in Eq.~\eqref{eq:scaling-assumption}, let alone surpass it. For the current example, we find that the volume-law coefficient grows linearly with $n$. This is the fastest growth with $n$ allowed by our framework.

\subsection{Spin systems}

While we phrased everything in the language of particles and the particle number $\hat{N}$, our formalism equally applies to general spin-$j$ systems. The generating function for the latter systems is
\begin{align}
    \zeta(z)=1+z+\dots+z^{2j}\,.
\end{align}
The total particle-number operator $\hat N$ and the total magnetization operator $\hat{M}=\sum_i\hat{S}_i^z$ are related by
\begin{align}\label{eq:relation_N_M}
    \hat{N}=jV+\hat{M},
\end{align}
where $\hat{S}^z_i$ is the local spin operator along the $z$-direction. The maximal particle density is then $n_{\max}=2j$. The dimension $d_N=\dim\mathcal{H}^{(N)}$ in this context can be understood as a generalization of the binomial coefficients, because we have
\begin{align}
    (1+z+\dots+z^{n_{\max}})^V=\sum_N d_N z^N\,,
\end{align}
where $d_N$ would be a regular binomial for $n_{\max}=1$. In general, we can refer to $d_N$ as the $(n_{\max}+1)$-nomial coefficients, also known as extended binomial coefficient. Their asymptotics and properties have been studied in various contexts~\cite{euler1801evolutione, comtet1974advanced, andrews1975theorem, neuschel2014note, li2014asymptotic, neuschel-personal} and there even exists the closed-form sum
\begin{align}
    d_N(V)=\sum_{k}(-1)^k\binom{V}{k}\binom{V+N-k(n_{\max}+1)-1}{V-1}\,.
\end{align}

\begin{figure}[t!]
    \centering
    \begin{tikzpicture}
        \draw (0,0) node{\includegraphics[width=\linewidth]{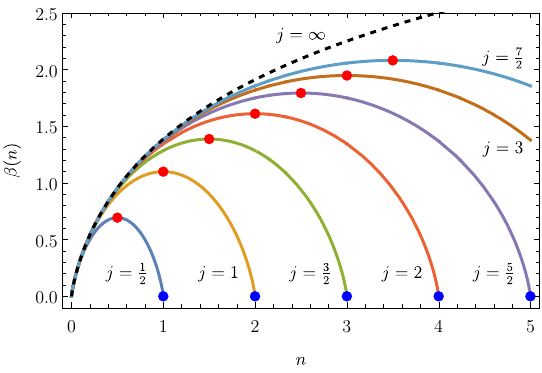}};
    \end{tikzpicture}
    \caption{{\it Spin systems.} $\beta(n)$ vs $n$ for spin-$j$ systems with different values of $j$. The red dots mark the maxima $\beta(n^*)=\ln(1+2j)$ with $n^*=\frac{n_{\max}}{2}=j$, and the blue dots mark $n_{\max}=2j$. Note that, with increasing $j$, the curves approach the $j=\infty$ result plotted as a dashed line.}
    \label{fig:spin}
\end{figure}

To find the asymptotics of $d_N$, we apply saddle point equation~\eqref{eq:solution-saddle}, where we need to find the unique positive root $z_0>0$ of the polynomial
\begin{align}
    z_0\,\zeta'(z_0)-n\, \zeta(z_0)=\sum^{n_{\max}}_{k=0}(k-n)z^k_0=0\,.
\end{align}
For $n_{\max}\leq 4$, there exist (increasingly cumbersome) closed expressions for the solutions, while for $n_{\max}\geq 5$ the solution can be efficiently evaluated numerically. In particular, we have
\begin{align}
    n_{\max}&=1:\quad z_0(n)=\frac{n}{1-n}\,,\\
    n_{\max}&=2:\quad z_0(n)=\frac{\sqrt{1+6n-3n^2}+n-1}{2(2-n)}\,,\label{eq:spin-1-z0}\\
    n_{\max}&=\infty: \quad z_0(n)=\frac{n}{1+n}\,,
\end{align}
where the first case is equivalent to spinless fermions and the last one to spinless bosons. We show $\beta(n)$ for different values of $j$ in Fig.~\ref{fig:spin}. In all cases with finite $n_{\max}$, $\beta(n)$ is symmetric under the $n\leftrightarrow n_{\max}-n$ swap, with $n^*=n_{\max}/2=j$ and $\beta(n^*)=\ln(1+n_{\max})=\ln(1+2j)$. For fixed $n$, we can take the large spin limit $j\to\infty$, in which $\beta(n)$ approaches the expression for bosons, \ie case (c) from Fig.~\ref{fig:bucket_numbers_common_systems}.

\subsection{Systems with general $U(1)$ charges}
Let us emphasize that our formalism can be equally applied to systems with general $U(1)$ charges that are multiples of some elementary charge $q$, such as particles with electrical charges (which are multiples of the elementary charge $e$). To accommodate negative charges, the generating function $\zeta(z)$ takes the form:
\begin{align}
    \zeta(z)=\sum^{k_{\max}}_{k=k_{\min}}a_k z^k\,,
\end{align}
where $a_k=\dim\mathcal{H}^{(k)}_{\mathrm{loc}}$ describes the dimension of the Hilbert space containing states with charge $kq$, and $k_{\min}$ can take negative values. In most situations, we expect charge conjugation symmetry, such that $k_{\min}=-k_{\max}$ and $a_k=a_{-k}$, but technically this is not required by the formalism. We further note that the entropy diverges if $k_{\min}=-\infty$ and $k_{\max}=\infty$, as this implies that there are infinitely ways to pair states with positive and negative charge to get any finite total charge. For $k_{\min}>-\infty$, one can always map this setting to that of the fixed total particle number by redefining $k\to k-k_{\min}$, which then starts at $k=0$ and runs up to $n_{\max}=k_{\max}-k_{\min}$.

\subsection{Composite systems}
It is straightforward to use our framework to describe composite systems with different species of particles. In that case, the total particle-number operator $\hat{N}=\sum_i\hat{N}_{(i)}$ corresponds to a sum over the total particle-number operators of different species $i$. As emphasized before, the description of such a system is completely determined once the sequence $\{a_k\}^\infty_{k=0}$ is known or, equivalently, once the generating function in Eq.~\eqref{eq:generating_function} is specified. For a composite system in which the local Hilbert space of each particle type is characterized by the sequences $\{a_k^{(i)}\}^\infty_{k=0}$, we can immediately compute
\begin{align}
    \zeta^\text{tot}(z)=\prod_i \zeta^{(i)}(z)=\prod_i\sum_k a^{(i)}_k z^k\,,
\end{align}
from which we can then determine $\beta(n)$ as described in Eq.~\eqref{eq:solution-saddle}.

\textbf{Example: Spinless fermions (a) and bosons (c).} For this combination, we find
\begin{align}
\begin{split}
    \zeta^\text{tot}(z)&=(1+z)(1+z+z^2+\dots)\\
    &=1+2z+2z^2+\dots=\frac{1+z}{1-z},
\end{split}
\end{align}
leading to $z^\text{tot}_0(n)=\frac{\sqrt{1+n^2}-1}{n}$, and
\begin{align}
    \beta^\text{tot}(n)=\operatorname{arcsinh}(n)-n\ln\frac{\sqrt{1+n^2}-1}{n}\,.
\end{align}

If we consider $m$ species of the same type of particles, the function $\zeta^\text{tot}_m(z)$ has the form
\begin{align}
    \zeta^\text{tot}_m(z)=[\zeta(z)]^m\,,
\end{align}
where $\zeta(z)$ is the single-species generating function. The resulting version of the saddle point Eq.~\eqref{eq:psi_derivative_equals_zero} then gives
\begin{align}
    \beta^{\text{tot}}_m(n)=m\beta(\tfrac{n}{m})\,,
\end{align}
where $\beta(n)$ is the single-species result.

\textbf{Example: spin-$\frac{1}{2}$ fermions.} This corresponds to two-species spinless fermions so
\begin{align}
    \beta^{\text{tot}}_2(n)=2 \left[-\frac{n}{2}\ln\left(\frac{n}{2}\right)-\left(1-\frac{n}{2}\right)\ln\left(1-\frac{n}{2}\right)\right]\,.
\end{align}

\section{Physical Hamiltonians}\label{sec:physical}

Next, we explore to which degree our analytical expressions describe the entanglement entropy of highly excited eigenstates of quantum-chaotic Hamiltonians. It is important to stress that there is no randomness in the Hamiltonians considered in this work. They are translationally invariant Hamiltonians with local interactions. We consider an extended spin-1 $XXZ$ model, which has $U(1)$ symmetry and as such the total magnetization is conserved, and the Bose-Hubbard model with or without an occupancy constraint in the lattice sites. The averages in this section, in contrast to those in Sec.~\ref{sec:analytical-results}, are carried out over a fixed number of eigenstates of those Hamiltonians.

\subsection{Spin-1 $XXZ$ model}\label{sec:Spin-1XXZ}

We focus first on the extended spin-1 $XXZ$ model (with anisotropy $\Delta$) in chains with $V$ sites, with Hamiltonian
\begin{eqnarray}
     H&=& H_0+\lambda H_1\,, \label{eq:Spin1_Hamiltonian} \\
    H_0&=& -\sum_{i=1}^V \hat{S}^x_i \hat{S}^x_{i+1} + \hat{S}^y_i \hat{S}^y_{i+1} +\Delta\hat{S}^z_i \hat{S}^z_{i+1} \,, \nonumber \\
    H_1&=& \sum_{i=1}^V (\hat{\vec{S}}_i\cdot \hat{\vec{S}}_{i+1})^2 
     -  \mu[2(\hat{S}^z_i)^2-(\hat{S}^z_i \hat{S}^z_{i+1})^2]\, \nonumber \\
     &&\hspace{0.5cm}-\nu[(\hat{S}^x_i \hat{S}^x_{i+1} + \hat{S}^y_i \hat{S}^y_{i+1})\hat{S}^z_i \hat{S}^z_{i+1} + \text{H.c.}] \,,\nonumber
\end{eqnarray}
where $\mu=\Delta-1\,,\, \nu=2-\sqrt{2(1+\Delta)}$, and $\hat{\vec{S}}_i=(\hat{S}^x_i,\hat{S}^y_i,\hat{S}^z_i)$ is the spin-$1$ operator at site $i$, and we consider periodic boundary conditions. For $\lambda=1$, this model is an integrable generalization of the spin-1 $XXZ$ model (also known as the Zamolodchikov-Fateev model)~\cite{Zamolodchikov_1980,Bytsko_2003}. For $\lambda=0$, unlike the spin-$\frac{1}{2}$ $XXZ$ model, this model is quantum-chaotic independently of the value of $\Delta$ (we set $\Delta=0.55$ to be in the maximally chaotic regime, following the discussion in Ref.~\cite{kliczkowski2023average}). The extended spin-1 $XXZ$ model allows us to probe the effect that the $U(1)$ symmetry and quantum chaosversusintegrability have on the entanglement entropy of highly excited energy eigenstates beyond the usually considered spin-$\frac{1}{2}$ case~\cite{LeBlond_19, Bianchi_2022}. 

The total magnetization $\hat M=\sum_{i=1}^V \hat{S}^z_i$ is a conserved quantity of the Hamiltonian~\eqref{eq:Spin1_Hamiltonian}. $M$ in our spin-1 model plays the role that the total particle number $N$ plays in a corresponding particle model, with $N=M+V$. The magnetization per site $m=M/V$ in the spin-1 model is the equivalent of the particle filling fraction $n=N/V$ in the particle model, with $n=m+1$. An example of a corresponding particle model is that of indistinguishable bosons with the constraint that at most two bosons may occupy a lattice site. We consider such a case in Sec.~\ref{sec:BoseHubbard} in the context of the Bose-Hubbard model. 

The generating function for our three-dimensional local Hilbert space is
\begin{equation}\label{eq:zetaspin-1}
\zeta(z)=1+z+z^2\,.
\end{equation}
The asymptotic form of the entanglement entropy follows from Eq.~\eqref{eq:spin-1-z0}. It is given by Eq.~\eqref{eq:entropy_N_asymptotic} with $n^*=1$ and
\begin{align}
     \beta(n) &= (n-2)\ln(2-n)+(n-1)\ln(2) \nonumber\\ 
     &\phantom{=}+\ln(7-3n+\sqrt{1-3n(n-2)})  \nonumber\\  
    &\phantom{=}-n\ln(n-1+\sqrt{1-3n(n-2)})\,. \label{eq:Beta_Spin-1} 
\end{align}

\begin{figure}[!t]
    \includegraphics[width=0.98\columnwidth]{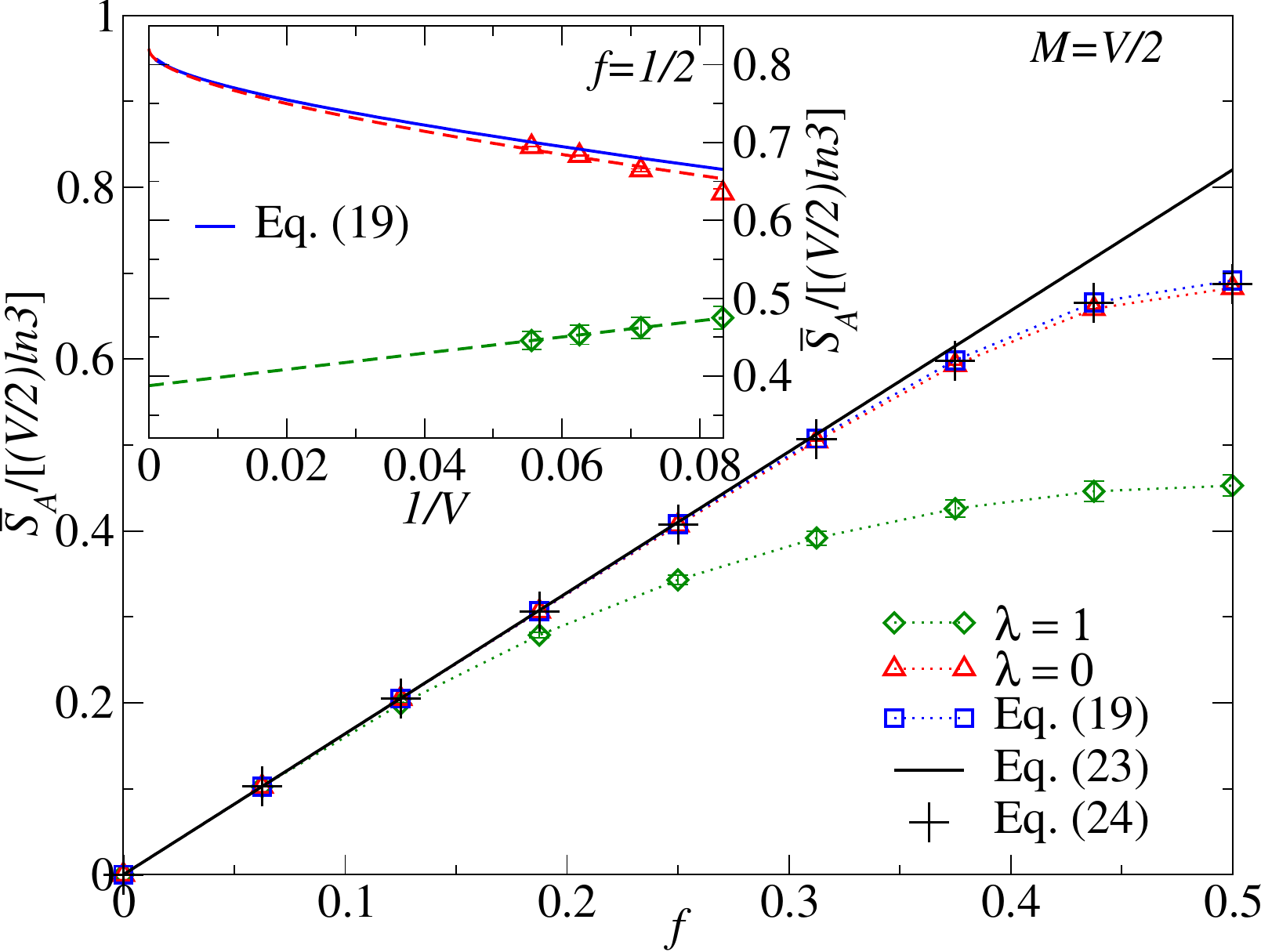}
    \caption{{\it Page curve for the extended spin-1 $XXZ$ model~\eqref{eq:Spin1_Hamiltonian} within the $M=V/2$ sector.} (Main panel) $\bar{S}_A$ vs the subsystem fraction $f$ for the quantum-chaotic ($\lambda=0$) and integrable ($\lambda=1$) Hamiltonian eigenstates for $V=16$. We also report $\braket{S_A}_N$ from the exact sum in Eq.~\eqref{eq:entropy_fixed_N_exact_sum}, as well as the leading order prediction for $\braket{S_A}_N$ from Eq.~\eqref{eq:entropy_N_asymptotic} and the double-scaling Kronecker-$\delta$-resolved expression for $\braket{S_A}_N$ from Eq.~\eqref{eq:EE-resolved} [with $\beta(n)$ from Eq.~\eqref{eq:Beta_Spin-1}], where $N=M+V$. (Inset) $\bar{S}_A$ vs $1/V$ at $f=1/2$ for the quantum-chaotic and integrable Hamiltonian eigenstates. The dashed lines following the numerical results are fits to the last three data points. For the quantum-chaotic case we use a single-parameter fit to $d_1+d_2/\sqrt{V}+p/V$, where $d_1$ and $d_2$ are set by Eq.~\eqref{eq:entropy_N_asymptotic} with $\beta(n)$ from Eq.~\eqref{eq:Beta_Spin-1}, and $p$ is our fitting parameter. For the integrable case we use a two-parameter fit to $p_1+p_2/V$. The continuous line shows $\braket{S_A}_N$ from the exact sum in Eq.~\eqref{eq:entropy_fixed_N_exact_sum}. The error bars in the numerical results are the standard deviation of the averages.}
    \label{fig:Spin-1_m=1/2}
\end{figure}

We compute the average entanglement entropy $\bar{S}_A$ of the highly excited eigenstates of the Hamiltonian~\eqref{eq:Spin1_Hamiltonian} in two magnetization sectors, $M=0$ and $M=V/2$. In our calculations we resolve all the symmetries of the Hamiltonian. Within each magnetization sector, translational invariance allows us to carry out the diagonalization within the total quasimomentum $k$ sectors, with $k\in\{2\pi \ell/V\, |\, \ell=-V/2+1,\, -V/2+2,\, \ldots,\, V/2\}$. The $k=0$ and $\pi$ sectors are further split into two subsectors (even and odd) under space reflection symmetry $P$. Furthermore, the $M=0$ sector exhibits an additional symmetry on top of the translational and space reflection symmetry, namely, the spin reflection symmetry $Z_2$. For each $M$, we use full exact diagonalization to obtain the 100 mid-spectrum energy eigenstates within each symmetry-resolved sector labeled by the applicable symmetries $\in\{k,P,Z_2\}$. Unless otherwise specified, the average entanglement entropy $\bar{S}_A$ reported for each $M$ is computed by taking the properly weighted average over all the symmetry sectors.

\begin{figure}[!t]
    \includegraphics[width=0.98\columnwidth]{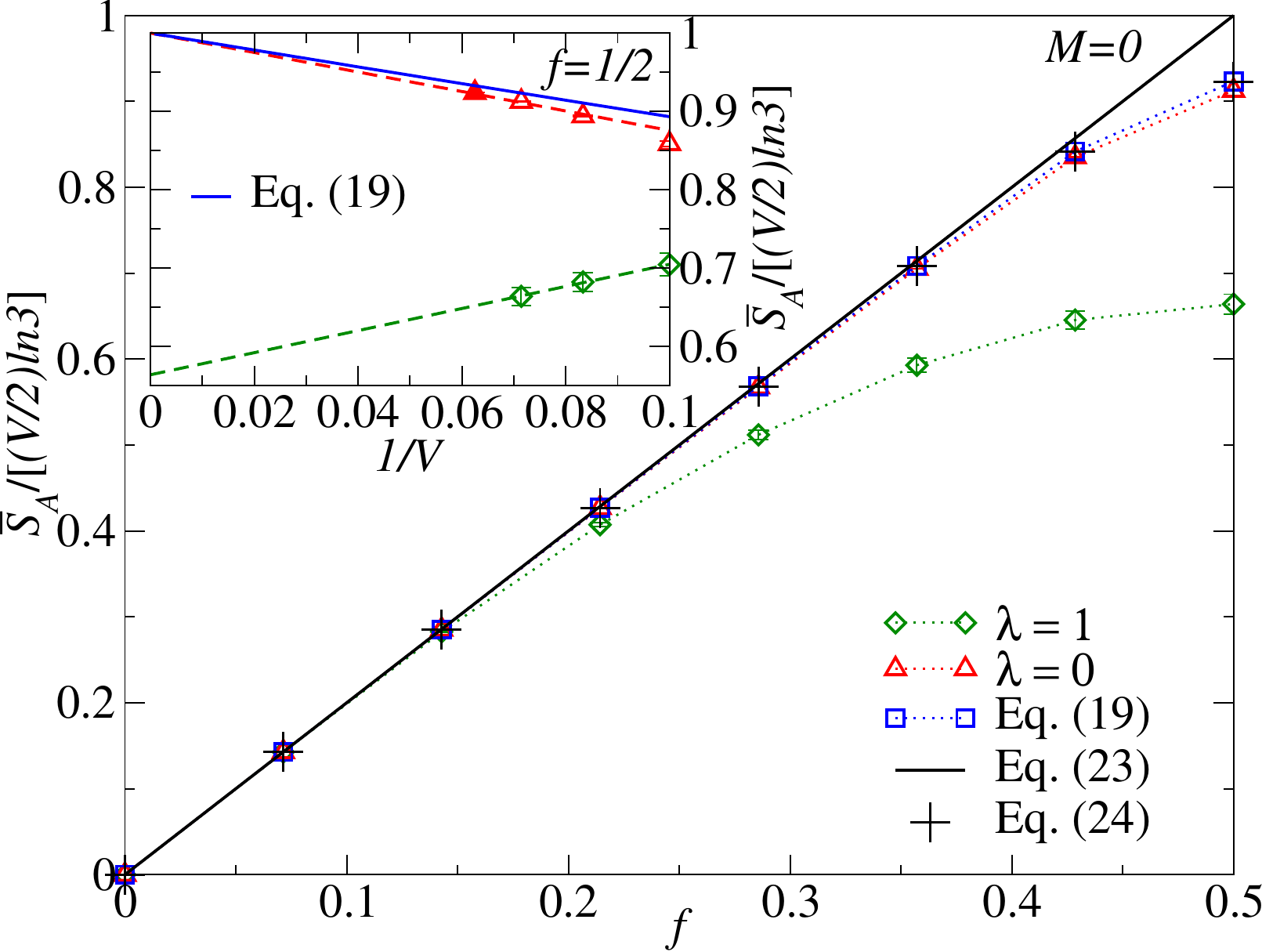}
    \caption{{\it Page curve for the extended spin-1 $XXZ$ model~\eqref{eq:Spin1_Hamiltonian} within the $M=0$ sector.} (Main panel) Same as in Fig.~\ref{fig:Spin-1_m=1/2} but for $V=14$. (Inset) Same as in Fig.~\ref{fig:Spin-1_m=1/2} except that for the quantum-chaotic case we use a single-parameter fit to $d+p/V$, where $d$ is set by Eq.~\eqref{eq:entropy_N_asymptotic} with $\beta(n)$ from Eq.~\eqref{eq:Beta_Spin-1}, and $p$ is our fitting parameter. Also, $\bar{S}_A$ for the largest chain ($V=16$, filled symbol) was obtained using only the $k=0,\,\pi$ sectors.}
    \label{fig:Spin-1_m=0}
\end{figure}

In Figs.~\ref{fig:Spin-1_m=1/2} and~\ref{fig:Spin-1_m=0}, we show our numerical results for $\bar{S}_A$ versus the subsystem fraction $f$ (also known as the Page curve) for the quantum-chaotic ($\lambda=0$) and integrable ($\lambda=1$) Hamiltonian~\eqref{eq:Spin1_Hamiltonian} eigenstates within the $m=1/2$ and $m=0$ magnetization sectors, respectively. For all subsystem fractions, $\bar{S}_A$ for the quantum-chaotic Hamiltonian eigenstates is very close to $\braket{S_A}_N$ from the exact sum in Eq.~\eqref{eq:entropy_fixed_N_exact_sum}. Furthermore, they both are nearly indistinguishable from our leading order analytical prediction (straight line) for $f\lesssim0.35$, and from the double-scaling Kronecker-$\delta$-resolved expression for $\braket{S_A}_N$ from Eq.~\eqref{eq:EE-resolved}  ($+$ symbols in the plots) for all values of $f$. However, $\bar{S}_A$ for the integrable Hamiltonian eigenstates departs from the exact sum for $\braket{S_A}_N$ as $f$ departs from $f=0$. 

Further evidence that our analytical results for $\braket{S_A}_N$ describe the leading terms of $\bar{S}_A$ for quantum-chaotic Hamiltonian eigenstates, which are distinct from those of $\bar{S}_A$ for integrable Hamiltonian eigenstates, is provided by the finite-size scaling analyses reported in the insets in Figs.~\ref{fig:Spin-1_m=1/2} and~\ref{fig:Spin-1_m=0} at $f=1/2$. Those numerical results suggest that, like in spin-1/2 systems~\cite{kliczkowski2023average}, the departure of $\bar{S}_A$ for quantum-chaotic Hamiltonian eigenstates from $\braket{S_A}_N$ occurs at the level of the $O(1)$ subleading correction, while for integrable Hamiltonian eigenstates already the leading terms are different. The same has been argued to occur in the absence of $U(1)$ symmetry~\cite{kliczkowski2023average, Haque_Khaymovich_2022, huang_21, nieva2023}, and in the presence of $SU(2)$ symmetry~\cite{patil2023}. Our results support the expectation that the average entanglement entropy of highly excited Hamiltonian eigenstates can be used as a universal diagnostics of quantum chaos and integrability in many-body systems~\cite{Bianchi_2022, LeBlond_19}.

\subsection{Bose-Hubbard model}\label{sec:BoseHubbard}

We consider next the Bose-Hubbard model in chains with $V$ sites, with Hamiltonian
\begin{equation}\label{eq:bh}
H=-\sum_{i=1}^{V}(\hat b_{i}^{\dagger} \hat b_{i+1}^{}+{\rm H.c.})+\frac{U}{2} \sum_{i=1}^V \hat n_{i}^{}(\hat n_{i}^{}-1)\,,
\end{equation}
where $\hat b_{i}^{\dagger}$ ($\hat b_{i}^{}$) is the bosonic creation (annihilation) operator at site $i$, and we consider periodic boundary conditions. The first term in Hamiltonian~\eqref{eq:bh} describes the hopping of bosons between nearest neighbor sites, and the second term describes their on-site interaction, with strength $U$ relative to the hopping amplitude (which we set to be the energy scale). Like in our analytical calculations, $N$ is the number of bosons and $n=N/V$ is the average filling. We compute the average entanglement entropy $\bar S_A$ over the 100 mid-spectrum energy eigenstates (50 for the smallest chain considered) within the even parity subsector of the $k=0$ total quasimomentum sector. 

\begin{figure}[!bt]
    \includegraphics[width=0.98\columnwidth]{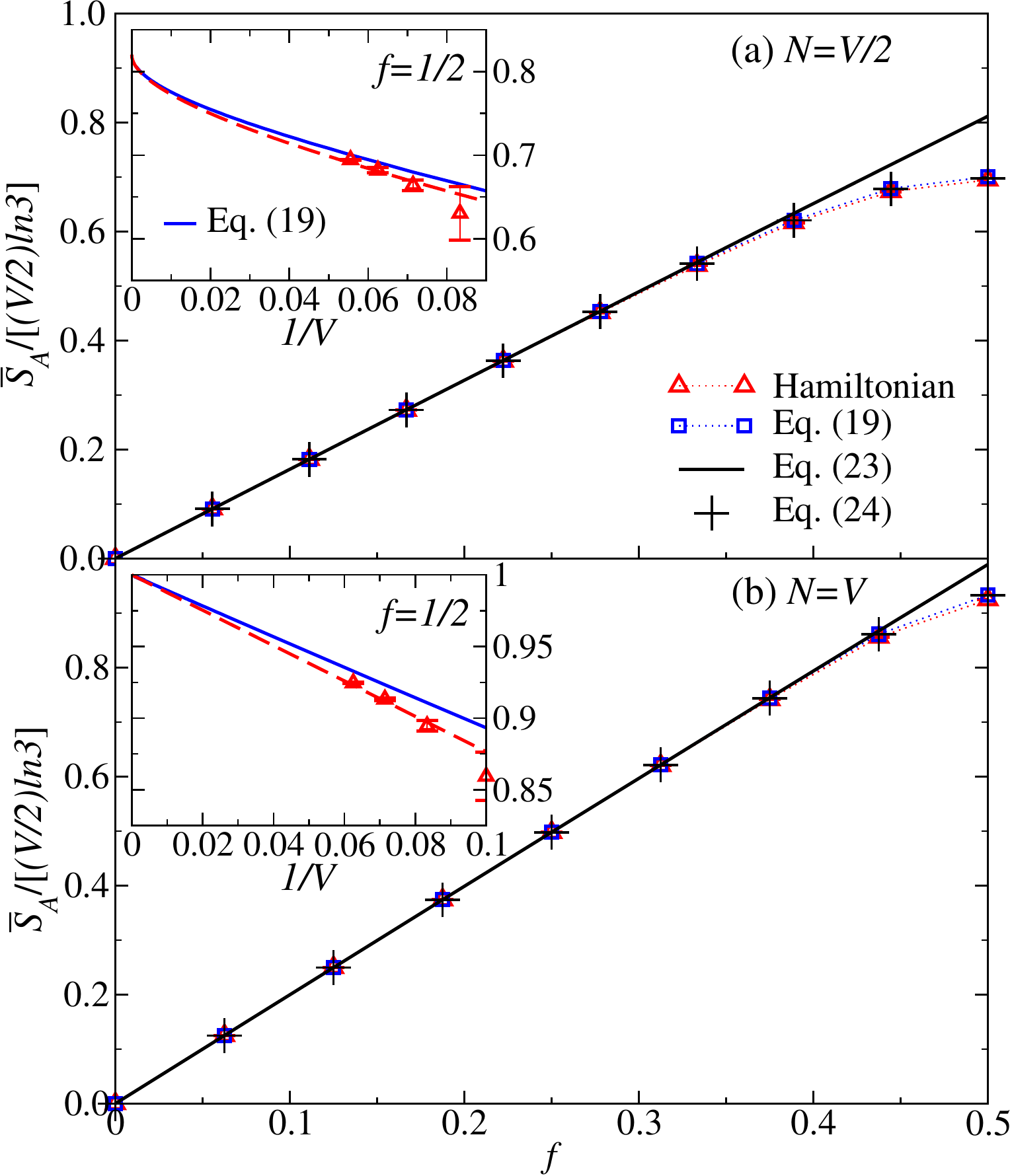}
    \caption{{\it Page curve for the Bose-Hubbard model~\eqref{eq:bh} with $n_{\max}=2$.} (Main panels) $\bar S_A$ vs $f$ for (a) $N=V/2$ with $V=18$ and $U=2.25$, and (b) $N=V$ with $V=16$ and $U=1.75$. We also report $\langle S_A\rangle_N$ from the exact sum in Eq.~\eqref{eq:entropy_fixed_N_exact_sum}, as well as the leading order prediction for $\braket{S_A}_N$ from Eq.~\eqref{eq:entropy_N_asymptotic} and the double-scaling Kronecker-$\delta$-resolved expression for $\braket{S_A}_N$ from Eq.~\eqref{eq:EE-resolved} [with $\beta(n)$ from Eq.~\eqref{eq:Beta_Spin-1}]. (Insets) $\bar S_A$ vs $1/V$ at $f=1/2$. The dashed lines following the numerical results are single parameter fits to the last three data points. In the inset in panel (a) [(b)] we use as fitting function $d_1+d_2/\sqrt{V}+p/V$ [$d+p/V$], where $d_1$ and $d_2$ [$d$] are set by Eq.~\eqref{eq:entropy_N_asymptotic}, and $p$ is our fitting parameter. The continuous lines show $\braket{S_A}_N$ from the exact sum in Eq.~\eqref{eq:entropy_fixed_N_exact_sum}. The error bars in the numerical results are the standard deviation of the averages.}
    \label{fig:BH_lm2}
\end{figure}

\begin{figure}[!bt]
    \includegraphics[width=0.98\columnwidth]{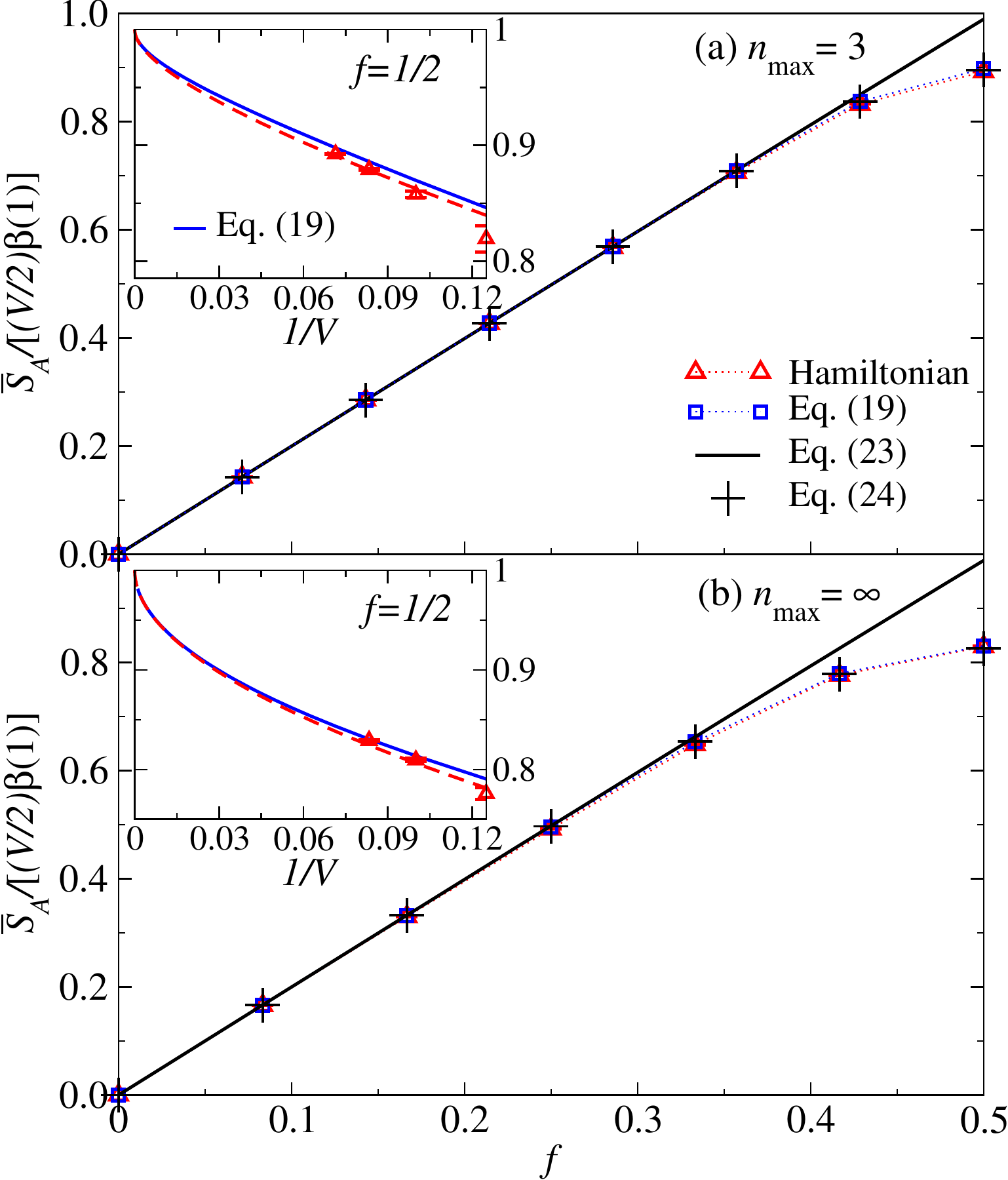}
    \caption{{\it Page curve for the Bose-Hubbard model~\eqref{eq:bh} with $N=V$.} (Main panels) Same as in Fig.~\ref{fig:BH_lm2} but for (a) $n_{\max}=3$ with $V=14$ and $U=1$, and (b) $n_{\max}=\infty$ with $V=12$ and $U=0.75$.  (Insets) The corresponding finite-size scalings as in Fig.~\ref{fig:BH_lm2}(a). Note that in this figure the normalization in the $y$ axes involves $\beta(1)$, which for $n_{\max}=3$ in panel (a) is $\beta(1)=1.284$, while for $n_{\max}=\infty$ in panel (b) is $\beta(1)=\ln 4$.}
    \label{fig:BH_lm3}
\end{figure}

We calculate the many-body eigenstates of Hamiltonian~\eqref{eq:bh} with no constraint on the maximal site occupation ($n_{\max}=\infty$), as is the case for the traditional Bose-Hubbard model, as well as with the constraint that at most $n_{\max}$ bosons may occupy a lattice site (in which case the local Hilbert space dimension is $n_{\max}+1$). When $n_{\max}=1$ the model is integrable. It describes hard-core bosons hopping on a lattice, and can be mapped onto the spin-1/2 $XX$ chain as well as onto a model of noninteracting spinless fermions~\cite{Cazalilla_2011}. The entanglement entropy of the eigenstates of those models was studied in Refs.~\cite{VidmarHackl_2017, HacklVidmar_2019}, and resembles the results in Sec.~\ref{sec:Spin-1XXZ} at the integrable point. Namely, the coefficient of the volume in the leading term is smaller than for quantum-chaotic Hamiltonian eigenstates and for random states. 

Here we focus on the cases $n_{\max}=2$, 3, and $\infty$, in which the model is quantum chaotic~\cite{Cazalilla_2011, Kollath_2010}. For each maximal site occupation $n_{\max}$ and filling $n$ considered, we select the value of $U$ to be in the maximally chaotic regime as per the discussion in Ref.~\cite{kliczkowski2023average}.

In Fig.~\ref{fig:BH_lm2} we plot Page curves for the Bose-Hubbard model with maximal site occupation $n_{\max}=2$, when $N=V/2$ with $V=18$ [Fig.~\ref{fig:BH_lm2}(a)] and when $N=V$ with $V=16$ [Fig.~\ref{fig:BH_lm2}(b)]. For $n_{\max}=2$, our model has the same generating function $\zeta(z)$ [Eq.~\eqref{eq:zetaspin-1}] and $\beta(n)$ [Eq.~\eqref{eq:Beta_Spin-1}] as the spin-1 model in Sec.~\ref{sec:Spin-1XXZ}. For both fillings one can see that $\bar{S}_A$ follows the prediction for $\langle S_A\rangle_N$ from the exact sum in Eq.~\eqref{eq:entropy_fixed_N_exact_sum}, and they both agree with leading order analytical prediction (straight line) for $f\lesssim0.35$ as well as with the double-scaling Kronecker-$\delta$-resolved expression for $\braket{S_A}_N$ from Eq.~\eqref{eq:EE-resolved}  ($+$ symbols in the plots) for all values of $f$, like in Figs.~\ref{fig:Spin-1_m=1/2} and~\ref{fig:Spin-1_m=0}. In the insets of Figs.~\ref{fig:BH_lm2}(a) and~\ref{fig:BH_lm2}(b), we carry out finite-size scaling analyses of the average entanglement entropy at $f=1/2$ that parallel the ones in the insets of Figs.~\ref{fig:Spin-1_m=1/2} and~\ref{fig:Spin-1_m=0}, respectively. The similarity of the scalings in the insets of Figs.~\ref{fig:BH_lm2}(a) and~\ref{fig:Spin-1_m=1/2} [Figs.~\ref{fig:BH_lm2}(b) and~\ref{fig:Spin-1_m=0}] is remarkable. It shows that the local Hilbert space dimension together with the filling/magnetization are the ones that control the leading terms [greater than $O(1)$] in the average entanglement entropy of highly excited energy eigenstates. Those leading terms appear to be universal independently of the model considered so long as it is quantum chaotic. As in the analytical calculations, it does not make a difference whether we deal with bosons or spins.

In Fig.~\ref{fig:BH_lm3} we plot Page curves for the Bose-Hubbard model at an average site occupation of one boson per site ($N=V$), when the maximal site occupation $n_{\max}=3$ with $V=14$ [Fig.~\ref{fig:BH_lm3}(a)] and when the maximal site occupation $n_{\max}=\infty$ with $V=12$ [Fig.~\ref{fig:BH_lm3}(b)]. For $n_{\max}=3$ the generating function is $\zeta=1+z+z^2+z^3$ ($n^*=3/2$), while for $n_{\max}=\infty$ it is $\zeta=1/(1-z)$ (there is no $n^*$). The agreement between the numerical results for Hamiltonian eigenstates and the analytical predictions for random states in Fig.~\ref{fig:BH_lm3} is similar to that in Figs.~\ref{fig:Spin-1_m=1/2}--\ref{fig:BH_lm2}. This supports the expectation that our analytical results predict the leading terms [greater than $O(1)$] in the average entanglement entropy of highly excited energy eigenstates of quantum-chaotic Hamiltonians with arbitrary local Hilbert spaces in the presence of particle-number conservation.

\section{Summary and Discussion}\label{sec:summary}

We calculated the bipartite entanglement entropy of typical pure states with a fixed number of particle, under the assumption that the total Hilbert space is constructed from identical local Hilbert spaces at individual sites. Our setup covers the vast majority of lattice systems of interest in physics, which involve fermions, bosons, spins, and their mixtures. We showed that our framework allows to straightforwardly predict what happens when one changes intrinsic properties, such as the spin of the particles.

We derived a general formula for the average entanglement entropy $\braket{S_A}_N$ up to $O(1)$ in $V$ in the thermodynamic limit, and showed that the variance $(\Delta S_A)^2_N$ vanishes exponentially fast in that limit. The latter finding implies that the computed average is also the typical entanglement entropy among all states with fixed particle number. Our result only depends on the asymptotic behavior of the dimension $d_N(V)=\frac{\alpha(n)}{\sqrt{V}}e^{\beta(n)V}$ of the Hilbert space for $N$ particles in $V$ sites on the particle density $n=N/V$. 

To use our results to predict the typical entanglement entropy of pure states in an arbitrary system, we provide a simple recipe based on the generating function $\zeta(z)$ in Eq.~\eqref{eq:generating_function}: (i) Find the sequence of Hilbert space dimensions $a_k=\dim \mathcal{H}^{(k)}_{\mathrm{loc}}$, where $\mathcal{H}^{(k)}_{\mathrm{loc}}$ is the Hilbert space of a site with $k$ particles, \ie $a_k$ tells us how many ways there are to place $k$ particles at a site. (ii) Write the generating function $\zeta(z)$ in Eq.~\eqref{eq:generating_function}, which will always be well-defined in a neighborhood of $z=0$. (iii) Find the unique saddle point solution $z_0(n)\geq 0$, such that $z_0\,\zeta'(z_0)=n\,\zeta(z_0)$. (iv) The asymptotics of the entanglement entropy is determined by $\beta(n)=\ln[\zeta(z_0)]-n\ln(z_0)$ and $\alpha(n)=\sqrt{-\beta''(n)/(2\pi)}$.

Moreover, the procedure introduced to find $\beta(n)$ allows us to relate the asymptotic properties of $\beta(n)$ to the behavior of the sequence $\{a_k\}^{\infty}_{k=0}$. Unsurprisingly, $\beta(n)$ is a bounded function for finite $n_{\max}$. Remarkably, we find that for $n_{\max}=\infty$, and for at most polynomially growing $a_k=O(k^p)$, $\beta(n)=O(\ln{n})$. Only in the rather exotic case of exponentially growing $a_k=O(e^k)$, we find $\beta(n)=O(n)$. Indistinguishably plays a fundamental role in those results. In Appendix~\ref{app:case-study-distinguishable}, we show that the entanglement entropy of a system with a fixed number of \emph{distinguishable} particles grows as $V\ln{V}$, \ie faster than volume law. Such a super-extensive behavior of the entanglement entropy is unphysical on a fundamental level. It parallels the well known result for the entropy of an ideal gas of distinguishable particles resulting in the Gibbs' paradox.

While our expressions for $\braket{S_A}_N$ and $(\Delta S_A)^2_N$ are important in their own right, they quantify the typical entanglement entropy in Hilbert spaces with fixed numbers of particles, an important motivation for this study is gaining a better understanding of the entanglement entropy of highly excited eigenstates of generic particle-number conserving quantum-chaotic models. Using numerical calculations, we found evidence that the leading terms [greater than $O(1)$] in the average entanglement entropy of mid-spectrum eigenstates of two paradigmatic quantum-chaotic models, the spin-$1$ $XXZ$ model and the Bose-Hubbard model, are described by our analytical expression for $\braket{S_A}_N$. In contrast, when we repeat the analysis for the spin-$1$ $XXZ$ model at an integrable point ($\lambda=1$), we find an increasingly larger discrepancy at leading order as the subsystem fraction $f=V_A/V$ approaches $f=1/2$, consistent with previous findings in Refs.~\cite{LeBlond_19, Bianchi_2022}. Our results indicate that eigenstate entanglement entropy is a universal diagnostic of quantum chaos and integrability in many-body quantum systems for arbitrary local Hilbert space sizes, complementing previous studies for the case of local two-dimensional Hilbert spaces~\cite{LeBlond_19,Bianchi_2022}.

An important finding of our work is the insight that the term $[f+\ln(1-f)]/2$ in Eq.~\eqref{eq:entropy_N_asymptotic} is \emph{universal}, i.e., independent of the specifics of the particles/spins involved. This term was found (within a ``mean-field'' calculation) for the case of a local two-dimensional Hilbert space in Ref.~\cite{vidmar2017entanglement}. Our work establishes that it is a universal consequence of particle-number conservation, as it is not present once one removes such a constraint~\cite{page1993information, Bianchi_2022}. A well known example in which only the $O(1)$ term is universal, while the leading order depends on the specifics of the system, is the ground-state entanglement entropy of topologically ordered two-dimensional models~\cite{kitaev2006topological}. In their seminal work~\cite{kitaev2006topological}, Kitaev and Preskill related such an $O(1)$ term to the so-called total quantum dimension and coined it the \emph{topological entanglement entropy}. Remarkably, increasing the symmetry of the system from $U(1)$ to $SU(2)$ changes this $O(1)$ term, as proved in Ref.~\cite{patil2023} for the total spin $J=0$ case, in which case the $O(1)$ term is $3[f+\ln(1-f)]/2$. It is therefore a natural question for future work to explore whether the $O(1)$ correction is universal for each symmetry selected, \eg whether for all $SU(2)$ symmetric systems one has $3[f+\ln(1-f)]/2$.

Our work also establishes that, in the presence of $U(1)$ symmetry, there is always a $\sqrt{V}$ term if the system is split into two equal halves, \ie at $f=1/2$. This term was found for the case of a local two-dimensional Hilbert space in Ref.~\cite{vidmar2017entanglement}. We unveil two important facts about this term. The first one that it is also a universal consequence of particle-number conservation, as it is not present in its absence~\cite{page1993information, Bianchi_2022}. The second one is that the prefactor of $\sqrt{V}$ is completely fixed by the same function $\beta(n)$ that determines the leading order behavior. 

Finally, we identified the general location of Page's $-1/2$ correction in the presence of $U(1)$ symmetry. It is controlled via the special Kronecker $\delta_{n,n^*}$, which only appears at a filling density $n^*$ (of which there exists at most one) such that $\beta'(n^*)=0$. This term is mutually exclusive with the $\sqrt{V}$-term. The latter is proportional to $|\beta'(n)|$ and thus vanishes at $n=n^*$. Remarkably, our expression for $\braket{S_A}_N$ demonstrates that it suffices to find the functional form of $\beta(n)$ from the leading volume-law term to get the full asymptotics of the typical entanglement entropy up to $O(1)$ in $V$. This is striking as $\beta(n)$ only captures the leading order behavior of the Hilbert space dimension $d_N$. 

Interesting directions for future work include studying the symmetry-resolved entanglement entropy within our general framework, to generalize recent results obtained in the context of local two-dimensional Hilbert spaces~\cite{murciano2022symmetry}. Another interesting direction is to generalize our results, in which the total number of particles $N$ was fixed for all species at once, to the case in which the particle numbers are fixed independently for different species. Our framework opens the door to address many interesting questions in the context of the entanglement entropy of composite systems.

\begin{acknowledgments}
We thank Eugenio Bianchi, Mario Kieburg, and Lev Vidmar for inspiring discussions. We thank Thorsten Neuschel for a suggestion leading to our asymptotic expansion of $d_N$, and Mario Kieburg for suggesting to consider the case of distinguishable particles. Y.Y.~acknowledges support by the Vacation Scholars Program of the School of Mathematics and Statistics at the University of Melbourne. R.P.~and M.R.~acknowledge support from the National Science Foundation under Grants No.~PHY-2012145 and No.~PHY-2309146. Y.Z.~acknowledges the support from the Dodge Family Postdoc Fellowship at the University of Oklahoma. LH acknowledges support by the Alexander von Humboldt Foundation, by grant $\#$62312 from the John Templeton Foundation, as part of the \href{https://www.templeton.org/grant/the-quantuminformation-structure-ofspacetime-qiss-second-phase}{‘The Quantum Information Structure of Spacetime’ Project (QISS)}, by grant $\#$63132 from the John Templeton Foundation and an Australian Research Council Australian Discovery Early Career Researcher Award (DECRA) DE230100829 funded by the Australian Government.
\end{acknowledgments}

\appendix

\section{Saddle point analysis}\label{app:saddle-point-analysis}
Here, we provide the necessary mathematical proofs to establish the properties of the saddle point $z_0(n)$ solving Eq.~\eqref{eq:psi_derivative_equals_zero} that are used in the main text.

\subsection{Generating function}
\label{app:assumptions}
In Eq.~\eqref{eq:generating_function}, we introduced the generating function $\zeta(z)=\sum^{\infty}_{k=0}a_kz^k$ with the rather mild requirement from Eq.~\eqref{eq:scaling-assumption} that the coefficients $a_k$ scale at most exponentially with $k$. Apart from this, the coefficients $a_k$ satisfy the following natural properties:
\begin{itemize}
	\item All $a_k$ are nonnegative integers, as they represent Hilbert space dimensions.
	\item We have $a_0>0$, as there must be at least one type of vacuum (zero particles) at a site. Usually $a_0=1$, \ie there is only one way to place zero particles at a site representing a \emph{unique} vacuum.
    \item There exists at least one $a_k\neq 0$ for $k>0$, as otherwise the system would not accommodate any particles.
\end{itemize}
For the growth of $a_k$, we can distinguish the following three cases, which are all compatible with the mild requirement of at most exponential growth discussed in the main text:
\begin{itemize}
	\item[(a)] \textbf{Series is finite with $a_k=0$ for $k>n_{\max}$.} Note that this is equivalent to $\lim_{k\to\infty} a_k=0$, as $a_k\in\amsbb{N}$. The function $\zeta$ is defined everywhere on the complex plane and there is a maximal total particle number given by $N_{\max}=Vn_{\max}$ that the system can accommodate. It corresponds to a particle density of $n_{\max}$ particles per site.
	\item[(b)] \textbf{Sequence of $a_k$ grows subexponentially with $a_k=o(e^k)$, but $\lim_{k\to\infty}a_k\neq 0$.} The series defining $\zeta$ converges inside a disk of radius $R=1$.
	\item[(c)] \textbf{Sequence of $a_k$ grows exponentially, such that $R=\exp(-\lim_{l\to\infty}\sup\{\ln(a_k)/k|k<l\})$ with $0<R<1$.} The series defining $\zeta$ converges inside a disk of radius $R$.\\
\end{itemize}

\subsection{Positive real saddle point $z_0(n)$}
\label{app:saddle_analysis}
We can rewrite the saddle point Eq.~\eqref{eq:psi_derivative_equals_zero} as
\begin{equation}
    \label{eq:function_Z_definition}
    Z(z_0)=n\quad \text{with}\quad Z(z)=z\frac{\zeta'(z)}{\zeta(z)}\,,
\end{equation}
where we introduced the function $Z: \{z\in\amsbb{C}: |z|<R\}\to\amsbb{C}$. Now consider the restriction of $Z$ to the real interval $z\in\left(0,R\right)$, which we will study in the following.

\begin{lemma}[Monotonicity of $Z$]\label{lem:monotonic}
The function $Z(z)=z\zeta'(z)/\zeta(z)$ is strictly increasing on the interval $(0,R)$.
\end{lemma}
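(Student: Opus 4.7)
The plan is to recognize $Z(z)$ as the mean of a naturally associated probability distribution on the nonnegative integers and then to show that its derivative equals the variance divided by $z$, which is strictly positive.

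Concretely, for any $z \in (0,R)$, define the weights
\begin{equation}
    p_k(z) = \frac{a_k z^k}{\zeta(z)}, \quad k = 0, 1, 2, \ldots.
\end{equation}
Since all $a_k \geq 0$, $z > 0$, and the series $\zeta(z) = \sum_k a_k z^k$ converges inside the disk of radius $R$ with $\zeta(z) > 0$ (because $a_0 > 0$ by the assumptions in Appendix~\ref{app:assumptions}), the $p_k(z)$ form a genuine probability distribution on $\amsbb{N}$. Noting that $z\zeta'(z) = \sum_k k a_k z^k$, we immediately obtain
\begin{equation}
    Z(z) = \frac{z\zeta'(z)}{\zeta(z)} = \sum_{k=0}^\infty k\, p_k(z) = \mathbb{E}_z[k].
\end{equation}

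The key step is then to differentiate $Z(z)$ by differentiating $p_k(z)$ term by term. A direct computation gives
\begin{equation}
    p_k'(z) = \frac{1}{z}\bigl(k - Z(z)\bigr) p_k(z),
\end{equation}
so that
\begin{equation}
    Z'(z) = \sum_k k\, p_k'(z) = \frac{1}{z}\bigl(\mathbb{E}_z[k^2] - \mathbb{E}_z[k]^2\bigr) = \frac{\mathrm{Var}_z[k]}{z}.
\end{equation}
Since $z > 0$, the sign of $Z'(z)$ is the sign of the variance.

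Finally, to conclude strict monotonicity, I would argue that $\mathrm{Var}_z[k] > 0$ on $(0,R)$. By the assumptions in Appendix~\ref{app:assumptions}, we have $a_0 > 0$ and at least one $a_{k_0} > 0$ with $k_0 > 0$. For $z \in (0,R)$, both $p_0(z) > 0$ and $p_{k_0}(z) > 0$, so the distribution $\{p_k(z)\}$ is supported on at least two distinct points of $\amsbb{N}$, making its variance strictly positive. Hence $Z'(z) > 0$ throughout $(0,R)$, completing the proof.

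I do not foresee a serious technical obstacle: the only subtlety is to justify termwise differentiation of $\zeta$ and of $p_k(z)$ inside the disk of convergence, which is standard for power series on open disks of convergence. The argument uses the assumptions from Appendix~\ref{app:assumptions} precisely where needed—positivity of $a_0$ to ensure $\zeta(z) > 0$ on $(0,R)$, and existence of a further nonzero $a_{k_0}$ to rule out a degenerate (single-atom) distribution.
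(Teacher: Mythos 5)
Your argument is correct, and it reaches the same quantity as the paper by a different route for the key positivity step. Both you and the paper compute $Z'(z)$ as a ratio with denominator $\zeta(z)^2>0$; the difference is how the numerator is shown to be positive. The paper expands $[\zeta'(z)+z\zeta''(z)]\zeta(z)-z\zeta'(z)^2$ via Cauchy products and regroups terms to exhibit every coefficient of the resulting power series as a manifestly nonnegative combination of the $a_k$, with at least one strictly positive term. You instead interpret $p_k(z)=a_kz^k/\zeta(z)$ as a probability distribution, so that $Z(z)=\mathbb{E}_z[k]$ and $zZ'(z)=\mathbb{E}_z[k^2]-\mathbb{E}_z[k]^2=\mathrm{Var}_z[k]$, and positivity follows from the distribution having at least two support points (using $a_0>0$ and some $a_{k_0}>0$, $k_0>0$, exactly as in Appendix~\ref{app:assumptions}). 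These are genuinely the same identity in disguise — note $z[\zeta'(z)+z\zeta''(z)]=\sum_k k^2a_kz^k$, so the paper's numerator is $\zeta(z)^2\,\mathrm{Var}_z[k]/z$ — but your probabilistic reading replaces the combinatorial rearrangement with Cauchy--Schwarz/variance positivity, which is shorter, conceptually transparent, and generalizes immediately (it is the standard fact that the logarithmic derivative $z\,\partial_z\ln\zeta(z)$ of a generating function with nonnegative coefficients is increasing). The paper's version buys a fully elementary, self-contained coefficient-level verification with no appeal to probabilistic language or interchange of summation and differentiation; your version does require (as you note) the routine justification that $\sum_k k\,a_kz^k$ and $\sum_k k^2a_kz^k$ converge and may be differentiated termwise inside the disk of radius $R$, which is standard and poses no gap.
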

\begin{proof}
We compute the derivative
\begin{equation}
    Z'(z)=\frac{\left[\zeta'(z)+z\zeta''(z)\right]\zeta(z)-z\zeta'(z)^2}{\zeta(z)^2}\,.
\end{equation}
where $\zeta$ was introduced in Eq.~\eqref{eq:generating_function}. The denominator clearly satisfies $\zeta(z)>0$ for $z\in(0,R)$, as all coefficients $a_k$ are nonnegative and we have $a_0>0$ and at least one other coefficient. Therefore, we look at the numerator whose expansion gives
\begin{widetext}
\begin{align}
    \left[\zeta'(z)+z\zeta''(z)\right]\zeta(z)-z\zeta'(z)^2&=\sum_{k=0}^\infty (k+1)^2a_{k+1}z^k\sum_{k=0}^\infty a_kz^k-\sum_{k=0}^\infty (k+1)a_{k+1}z^{k}\sum_{k=0}^\infty k a_kz^k \\
    &=\sum_{k=0}^\infty \left(\sum_{l=0}^k (l+1)^2a_{l+1}a_{k-l}\right)z^k-\sum_{k=0}^\infty \left(\sum_{l=0}^k (l+1)(k-l)a_{l+1}a_{k-l}\right)z^k \\
    \label{eq:Z_derivative_numerator}
    &=\sum_{k=0}^\infty \left( \sum_{l=0}^k(l+1)(2l+1-k)a_{l+1}a_{k-l}\right)z^k\\
    &=\sum_{k=0}^\infty \left( (k+1)^2a_{k+1}a_0+\sum^{\lfloor\frac{k}{2}\rfloor}_{l=1}(2l-1-k)^2a_{l}a_{k-l+1}\right)z^k\label{eq:summand-term}
\end{align}
\end{widetext}
where in the second line we used the Cauchy product of a power series and, in the fourth line, we took out the $l=k$ term and then combined the $l$-term with the $k-l-1$ term. Note that for odd $k$, the $l=(k-1)/2$ term vanishes. As all coefficients $a_k$ are nonnegative with $a_0>0$ and at least one other $a_k>0$, all summands in Eq.~\eqref{eq:summand-term} must be nonnegative and at least one must be positive. Therefore, we have $Z'(z)>0$ for all $z\in(0,R)$, from which the claim follows.
\end{proof}

\begin{lemma}[Boundary limits of $Z$]\label{lem:boundary}
On the interval $(0,R)$, the function $Z$ has the limits
\begin{align}
    \lim_{z\to 0}Z(z)=0\qquad\text{and}\qquad
    \lim_{z\to R}Z(z)=n_{\max}\,.
\end{align}
\end{lemma}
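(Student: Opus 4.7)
The plan is to prove the two limits separately. The $z\to 0$ limit is almost immediate: by the conventions in Appendix~\ref{app:assumptions}, $\zeta(0)=a_0>0$, while the numerator $z\zeta'(z)=\sum_{k\geq 1}k a_k z^k$ vanishes at $z=0$, so $Z(z)=z\zeta'(z)/\zeta(z)\to 0/a_0=0$.

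The $z\to R$ limit requires case analysis following the three growth regimes of Appendix~\ref{app:assumptions}. Case (a) (finite $n_{\max}$, $R=\infty$) is simplest: since $\zeta$ is a polynomial of degree $n_{\max}$ with leading coefficient $a_{n_{\max}}>0$, I would factor out $z^{n_{\max}}$ in both numerator and denominator to get $\zeta(z)\sim a_{n_{\max}}z^{n_{\max}}$ and $z\zeta'(z)\sim n_{\max}\,a_{n_{\max}}z^{n_{\max}}$ as $z\to\infty$, which gives $Z(z)\to n_{\max}$ directly.

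For cases (b) and (c), both with $n_{\max}=\infty$, the task is to show $Z(z)\to\infty$. My plan is to exploit the probabilistic interpretation $Z(z)=\sum_{k}k\,P_z(k)$ with the tilted distribution $P_z(k)=a_k z^k/\zeta(z)$ on $\amsbb{N}$. For any fixed $K$, the head $\sum_{k\leq K} a_k z^k$ is bounded above by the $z$-independent constant $\sum_{k\leq K}a_k R^k<\infty$. If $\zeta(z)\to\infty$ as $z\to R^-$, then $\sum_{k\leq K}P_z(k)\to 0$, and the simple bound $Z(z)\geq K\bigl[1-\sum_{k\leq K}P_z(k)\bigr]$ yields $\liminf_{z\to R}Z(z)\geq K$; arbitrariness of $K$ then gives $Z(z)\to\infty$.

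The step I expect to be the main obstacle is precisely the divergence $\zeta(R^-)=\infty$ in case (c). In case (b), since $a_k\not\to 0$ forces $a_k\geq 1$ infinitely often, the inequality $\zeta(z)\geq \sum_{k:\,a_k\geq 1}z^k$ combined with $R=1$ and monotone convergence immediately gives $\zeta(1^-)=\infty$. In case (c), by contrast, one can construct contrived integer sequences such as $a_k=\lfloor R^{-k}/k^3\rfloor$ for which $\zeta(R^-)$ stays finite and the lemma's conclusion would fail. Every physical system considered in the paper satisfies the natural nondegeneracy $\limsup_{k\to\infty}a_k R^k>0$, which together with nonnegativity of all $a_k$ forces $\zeta(R^-)=\infty$, so I would either impose this mild condition at the outset of the boundary analysis or verify it case by case for the examples treated.
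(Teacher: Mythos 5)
Your proof is correct in the cases where the lemma actually holds, and it reaches the divergence at $z\to R$ by a different mechanism than the paper. The paper handles the $z\to 0$ limit and the $n_{\max}<\infty$ case exactly as you do (direct evaluation of a ratio of polynomials), but for $n_{\max}=\infty$ it argues by contradiction via the logarithmic derivative: if $\zeta'/\zeta$ stayed bounded near $R$, then $\int_0^R \zeta'(z)/\zeta(z)\,\mathrm{d}z=\ln\bigl[\lim_{z\to R^-}\zeta(z)\bigr]-\ln\zeta(0)$ would be finite, whereas (invoking Abel's theorem for divergent series) that quantity diverges. Your tilted-distribution argument — bounding the head $\sum_{k\le K}a_k z^k$ by a $z$-independent constant so that $\sum_{k\le K}P_z(k)\to 0$ and hence $Z(z)\ge K\bigl[1-\sum_{k\le K}P_z(k)\bigr]$ — is more elementary, avoids the integral identity, and buys a cleaner quantitative statement; but note that both routes hinge on the same key input, namely $\zeta(z)\to\infty$ as $z\to R^-$, which you establish honestly in case (b) and the paper obtains from Abel's theorem.

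Your scepticism about case (c) is well placed and is in fact a gap shared by the paper's own proof: invoking Abel's theorem there silently assumes $\sum_k a_k R^k=\infty$, which the definition of case (c) (radius of convergence $0<R<1$) does not guarantee. For a sequence such as $a_k=\lfloor R^{-k}k^{-3}\rfloor$ both $\zeta$ and $\zeta'$ remain finite at $z=R$, so $\lim_{z\to R^-}Z(z)$ is finite while $n_{\max}=\infty$, and the lemma as stated fails (and with it the existence of the saddle point $z_0(n)$ for all $n\in(0,n_{\max})$). So your proposed extra hypothesis, e.g. $\limsup_{k\to\infty}a_k R^k>0$ (or, sharper, $\sum_k k\,a_k R^k=\infty$, which is exactly what is needed for $Z(R^-)=\infty$), is not a defect of your argument but a repair of the statement; it is satisfied by every physical system considered in the paper, so nothing downstream is affected.
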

\begin{proof}
For the first limit, we use $a_0>0$ and compute
\begin{equation}
        \lim_{z\to 0^+}Z(z)=\lim_{z\to0^+} z\frac{\zeta'(z)}{\zeta(z)}=\lim_{z\to0^+} z\frac{\sum_{k=1}^{\infty}ka_kz^{k-1}}{a_0+\sum_{k=1}^{\infty}a_kz^k}=0\,.
\end{equation}
For the second limit, we first consider $n_{\max}<\infty$. In this case, $\zeta(z)$ is a finite polynomial and direct evaluation yields
\begin{align}
    \lim_{z\to\infty}Z(z)=\lim_{z\to\infty} z\frac{\zeta'(z)}{\zeta(z)}=\lim_{z\to\infty} \frac{\sum_{k=1}^{n_{\max}}ka_kz^k}{\sum_{k=0}^{n_{\max}}a_kz^k}=n_{\max}\,.
\end{align}
For $n_{\max}=\infty$, we want to show that $\lim_{z\to R}Z(z)=\infty$, for which we use Abel's theorem for diverging series. Assume, for a contradiction, that $\lim_{z\to R}\frac{\zeta'(z)}{\zeta(z)}<\infty$. It follows that $\int^R_0\frac{\zeta'(z)}{\zeta(z)}\dd{z}$ is finite, but $\int^R_0\frac{\zeta'(z)}{\zeta(z)}\dd{z}=\ln[\lim_{z\to R^-}\zeta(z)]-\ln[\zeta(0)]$ diverges. Hence, we must have $\lim_{z\to R}\frac{\zeta'(z)}{\zeta(z)}=\infty$ and thus $\lim_{z\to R}Z(z)=\infty$.
\end{proof}

Together, the previous two lemmata establish that there exists a unique real solution $z_0(n)>0$ of the saddle point equation, which grows monotonically with $n$.

\begin{proposition}[Existence and monotonicity of $z_0$]\label{prop:existence}
For $n\in(0,n_{\max})$, there exists a unique positive solution $z_0(n)$ of the saddle point equation. Moreover, $z_0(n)$ increases monotonically with $n$, so that $z_0'(n)>0$.
\end{proposition}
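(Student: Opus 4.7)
The plan is to obtain the proposition essentially for free from Lemmas~\ref{lem:monotonic} and~\ref{lem:boundary} via the intermediate value theorem and the inverse function theorem. The preceding two lemmas have already done all of the real analytic work, so the role of this proposition is simply to package their conclusions as a statement about the function $z_0(n)$.

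First I would note that on the real interval $(0,R)$ the denominator $\zeta(z)$ is strictly positive (since $a_0>0$ and all other $a_k\geq 0$), so $Z(z)=z\zeta'(z)/\zeta(z)$ is continuous there. Combined with the strict monotonicity from Lemma~\ref{lem:monotonic} and the boundary limits $Z(z)\to 0$ as $z\to 0^+$ and $Z(z)\to n_{\max}$ as $z\to R^-$ from Lemma~\ref{lem:boundary}, the intermediate value theorem delivers at least one $z_0\in(0,R)$ solving $Z(z_0)=n$ for every $n\in(0,n_{\max})$, and strict monotonicity promotes this to a unique solution. This defines the function $z_0:(0,n_{\max})\to(0,R)$.

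For the monotonicity claim, I would observe that $Z$ is a strictly increasing continuous bijection from $(0,R)$ onto $(0,n_{\max})$, so its inverse $z_0=Z^{-1}$ is automatically strictly increasing and continuous. To upgrade to the differentiable statement $z_0'(n)>0$, I would invoke the inverse function theorem: $\zeta$ is analytic on the disk $\{|z|<R\}$, so $Z$ is smooth on $(0,R)$, and the computation already carried out in the proof of Lemma~\ref{lem:monotonic} [cf.\ Eq.~\eqref{eq:Z_derivative_numerator}] gives $Z'(z)>0$ pointwise on $(0,R)$. Hence $z_0$ is smooth on $(0,n_{\max})$ with $z_0'(n)=1/Z'(z_0(n))>0$.

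There is no genuine obstacle here — the lemmas have been arranged precisely so that this proposition reduces to standard calculus. The only wrinkle worth acknowledging is the case $n_{\max}=\infty$: Lemma~\ref{lem:boundary} then says $Z(z)\to\infty$ as $z\to R^-$, which should be read in the obvious sense that $Z$ eventually exceeds any prescribed finite $n>0$, so the intermediate value argument still produces a preimage and the inverse-function argument goes through unchanged on the open interval $(0,\infty)$.
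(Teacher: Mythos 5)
Your proof is correct and follows essentially the same route as the paper: combine Lemma~\ref{lem:monotonic} (strict monotonicity of $Z$) with Lemma~\ref{lem:boundary} (boundary limits) to get a unique preimage $z_0(n)$ for each $n\in(0,n_{\max})$, then deduce monotonicity of the inverse. Your explicit appeal to the inverse function theorem with $z_0'(n)=1/Z'(z_0(n))>0$ is in fact a bit more careful than the paper, which passes from ``$z_0(n)$ is strictly increasing'' to ``$z_0'(n)>0$'' without comment.
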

\begin{proof}
Recall from Eq.~\eqref{eq:function_Z_definition} that $Z(z_0(n))=n$. Lemma~\ref{lem:monotonic} establishes that $Z$ is strictly increasing and lemma~\ref{lem:boundary} shows that the range of $Z$ is given by $(0,n_{\max})$. Therefore, there exists a unique solution $z_0=z_0(n)$, such that $Z(z_0)=n$. As the function $Z(z)$ is strictly increasing, the argument $z_0(n)$ must increase when increase $n$. This means that $z_0(n)$ is a strictly increasing function of $n$, so that $z_0'(n)>0$.
\end{proof} 

The saddle point defining equation $Z(z_0(n))=n$, along with the results of the two previous sections, mean that $z_0(0)=0$ and $\lim_{n\to {n_{\max}}}z_0(n)=R$.

\subsection{Analyzing the exponential scaling $\beta(n)$}
In the following discussion, we analyze the derivatives of $\beta(n)$ to understand its behavior.

\begin{proposition}[Derivatives of $\beta$]
The derivatives of $\beta(n)$ are given by
\begin{align}
\label{eq:beta_derivatives}
    \beta'(n)=-\ln(z_0)\quad\text{and}\quad \beta''(n)=-\frac{z'_0}{z_0}\,.
\end{align}
\begin{proof}
We compute $\beta'(n)$ straight from its definition in Eq.~\eqref{eq:solution-saddle} and get
\begin{align}
    \beta'(n)&=\pdv{n}\psi(z_0)=\psi'(z_0)z_0'-\ln(z_0)\,.
\end{align}
The first term is zero because it is precisely the saddle point condition, so we see that $\beta'(n)=-\ln(z_0)$. Then the second derivative of $\beta$ trivially follows by taking another derivative with respect to $n$.
\end{proof} 
\end{proposition}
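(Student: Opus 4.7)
The approach is to apply the chain rule carefully to $\beta(n)=\psi[z_0(n)]$, keeping in mind that the function $\psi(z)=\ln[\zeta(z)]-n\ln(z)$ from Eq.~\eqref{eq:psi_zeta_index_definition} depends on $n$ both explicitly (through the linear term $-n\ln z$) and implicitly (through the saddle point location $z_0(n)$). This is essentially an envelope-theorem computation: at a critical point of $\psi$ in $z$, the derivative of the extremal value with respect to the parameter $n$ equals the partial derivative of $\psi$ with respect to $n$, evaluated at the critical point.

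For the first derivative, I would write $\beta'(n)=\frac{d}{dn}\psi[z_0(n);n]$ and expand via the chain rule as $\psi'[z_0(n)]\,z_0'(n)+\partial_n\psi[z_0(n);n]$, where the prime on $\psi$ denotes differentiation with respect to its $z$-argument at fixed $n$. By the defining saddle point condition $\psi'[z_0(n)]=0$ (Eq.~\eqref{eq:psi_derivative_equals_zero}, whose validity is established in Appendix~\ref{app:saddle_analysis} via Proposition~\ref{prop:existence}), the first term vanishes. The explicit $n$-dependence of $\psi$ gives $\partial_n[\ln\zeta(z)-n\ln z]=-\ln z$, which when evaluated at $z_0(n)$ yields $\beta'(n)=-\ln z_0(n)$.

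For the second derivative, I would simply differentiate the result with respect to $n$ directly: $\beta''(n)=-z_0'(n)/z_0(n)$, recovering the claimed formula. As a sanity check, Proposition~\ref{prop:existence} already guarantees $z_0(n)>0$ and $z_0'(n)>0$ on $(0,n_{\max})$, so this immediately implies $\beta''(n)<0$, consistent with the concavity asserted earlier below Eq.~\eqref{eq:solution-saddle} and with the Gaussian approximation used in Sec.~\ref{sec:analytical-results}.

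The main obstacle is only notational bookkeeping: one must carefully distinguish between $\psi'$ interpreted as $\partial_z\psi$ (which is what vanishes at the saddle point) and the total derivative $\frac{d}{dn}\psi[z_0(n);n]$ appearing in $\beta'(n)$, which receives contributions from both the chain rule and the explicit $n$-dependence. Once this distinction is set up correctly, the saddle point condition does all the work and the proof reduces to two one-line calculations; no additional estimates or convergence arguments are required beyond those already established in the preceding subsections.
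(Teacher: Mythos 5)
Your proposal is correct and follows essentially the same route as the paper: differentiate $\beta(n)=\psi[z_0(n)]$ via the chain rule, use the saddle point condition $\psi'(z_0)=0$ to kill the implicit-dependence term so that only the explicit $-\ln z_0$ survives, and then differentiate once more to get $\beta''(n)=-z_0'/z_0$. Your explicit envelope-theorem framing and careful separation of $\partial_z\psi$ from the total $n$-derivative simply makes transparent what the paper writes compactly.
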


\begin{proposition}[Concavity of $\beta$]
The function $\beta(n)$ is for all $n\in (0,n_{\max})$ concave, \ie $\beta''(n)<0$ for all $n\in (0,n_{\max})$.
\end{proposition}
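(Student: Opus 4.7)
The plan is to read off concavity directly from the formula $\beta''(n)=-z_0'(n)/z_0(n)$ established in the preceding proposition, together with the monotonicity results already proved.

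First, I would note that by Proposition~\ref{prop:existence}, the saddle point $z_0(n)$ is positive on $(0,n_{\max})$ and $z_0'(n)>0$ there. Since both the numerator and denominator of $z_0'(n)/z_0(n)$ are strictly positive, we immediately get $\beta''(n)=-z_0'(n)/z_0(n)<0$ on $(0,n_{\max})$, which is the desired concavity.

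If one wanted a self-contained argument not relying on the monotonicity clause of Proposition~\ref{prop:existence}, I would instead differentiate the defining relation $Z(z_0(n))=n$ implicitly with respect to $n$, obtaining $Z'(z_0)\,z_0'(n)=1$, so that $z_0'(n)=1/Z'(z_0(n))$. Lemma~\ref{lem:monotonic} shows $Z'(z)>0$ for all $z\in(0,R)$, hence $z_0'(n)>0$, and substituting into $\beta''(n)=-z_0'(n)/z_0(n)$ again yields $\beta''(n)<0$.

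There is essentially no obstacle here: all the work was already done in Lemma~\ref{lem:monotonic} (positivity of $Z'$, via the nonnegative Cauchy-product expansion) and in Proposition~\ref{prop:existence} (existence of a strictly positive, strictly increasing $z_0(n)$). The only point worth remarking on is that the argument is uniform across the three regimes for $\{a_k\}$ discussed in Appendix~\ref{app:assumptions}, since $z_0(n)$ stays in the open interval $(0,R)$ for every $n\in(0,n_{\max})$, so the formula for $\beta''$ and the positivity of $z_0,z_0'$ apply throughout.
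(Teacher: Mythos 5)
Your proposal is correct and follows essentially the same route as the paper: it reads $\beta''(n)=-z_0'/z_0$ off Eq.~\eqref{eq:beta_derivatives} and combines it with the positivity and strict monotonicity of $z_0(n)$ from Proposition~\ref{prop:existence}. The alternative implicit-differentiation remark is just a restatement of how that monotonicity was obtained from Lemma~\ref{lem:monotonic}, so it adds no genuinely new argument.
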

\begin{proof}
We recall from Eq.~\eqref{eq:beta_derivatives} that $\beta''(n)=-\frac{z_0'}{z_0}$. The saddle point is positive and in Proposition~\ref{prop:existence} we showed that $z'_0>0$ for $n\in (0,n_{\max})$, from which the claim follows.
\end{proof}

\begin{proposition}[Monotonicity of $\beta$]
There exists a unique $n^*=\frac{\zeta'(1)}{\zeta(1)}$, such that $\beta'(n^*)=0$, with $\beta'(n)>0$ for $n\in (0,n^*)$ and $\beta'(n)<0$ for $n\in(n^*,n_{\max})$, if and only if $n_{\max}<\infty$. Otherwise, we have $\beta'(n)>0$ for all $n\in (0,\infty)$.
\end{proposition}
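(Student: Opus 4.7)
The plan is to read everything off the identity $\beta'(n)=-\ln[z_0(n)]$ established in Eq.~\eqref{eq:beta_derivatives}, so that the sign of $\beta'(n)$ is determined entirely by whether $z_0(n)$ is below, equal to, or above $1$. Since $z_0:(0,n_{\max})\to(0,R)$ is by Proposition~\ref{prop:existence} a strictly increasing bijection inverse to $Z(z)=z\zeta'(z)/\zeta(z)$, the equation $z_0(n)=1$ has a (necessarily unique) solution in $(0,n_{\max})$ precisely when $1$ lies in the domain of $Z$ and $n=Z(1)=\zeta'(1)/\zeta(1)$ lies in the range of $Z$.

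For the direction $n_{\max}<\infty$, the generating function $\zeta$ is a polynomial, hence $Z$ is defined on all of $(0,\infty)$; by Lemma~\ref{lem:boundary} its range is $(0,n_{\max})$. The value $n^*:=Z(1)=\zeta'(1)/\zeta(1)$ is a finite, strictly positive number (since $a_0>0$ and at least one $a_k$ with $k\geq 1$ is positive), and it is strictly less than $n_{\max}$ (since $Z$ attains the value $n_{\max}$ only in the limit $z\to\infty$). Monotonicity of $z_0$ then gives $z_0(n)<1$ for $n\in(0,n^*)$ and $z_0(n)>1$ for $n\in(n^*,n_{\max})$, which via $\beta'(n)=-\ln[z_0(n)]$ translates directly into the asserted sign pattern, while uniqueness of $n^*$ follows from the strict monotonicity of $z_0$.

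For the converse direction $n_{\max}=\infty$, the growth assumption in Eq.~\eqref{eq:scaling-assumption} forces the radius of convergence to satisfy $R\leq 1$ (cases (b) and (c) of Appendix~\ref{app:assumptions}). Hence $z_0(n)\in(0,R)\subseteq(0,1)$ for every $n>0$, so $-\ln[z_0(n)]>0$ throughout $(0,\infty)$. In particular the equation $z_0(n)=1$ has no solution, no $n^*$ with $\beta'(n^*)=0$ exists, and $\beta'(n)>0$ everywhere. Combining the two cases yields the stated equivalence.

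I do not expect a serious obstacle: the entire argument rides on the monotonicity and boundary behavior of $Z$ already proven in Lemmata~\ref{lem:monotonic} and~\ref{lem:boundary}. The only point that deserves mild care is verifying that $n^*=\zeta'(1)/\zeta(1)$ lies strictly inside $(0,n_{\max})$ in the polynomial case, which follows from $a_0>0$, the presence of some $a_k>0$ with $k\geq 1$, and the observation that the supremum $n_{\max}=\lim_{z\to\infty}Z(z)$ is only approached asymptotically and never attained at a finite $z$.
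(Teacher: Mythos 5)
Your proposal is correct and follows essentially the same route as the paper: both rest on the identity $\beta'(n)=-\ln[z_0(n)]$, the strict monotonicity of $z_0$ (equivalently of $Z$), and the observation that $z_0(n^*)=1$ forces $n^*=Z(1)=\zeta'(1)/\zeta(1)$, with the case $n_{\max}=\infty$ handled by $z_0(n)<R\leq 1$. The only cosmetic difference is that you locate $n^*$ by inverting $Z$ at $z=1$ directly, whereas the paper phrases existence and uniqueness via the intermediate value theorem applied to the decreasing function $\beta'$.
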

\begin{proof}
When $n_{\max}=\infty$, $z_0\in\interval[open right]{0}{R}$, where $R\leq1$. Hence, $\beta'(n)=-\ln(z_0)>0$. When $n_{\max}<\infty$, $z_0\in\interval[open right]{0}{\infty}$. Since $z_0$ is monotonically increasing, $\beta'$ is monotonically decreasing and $\beta'\in\interval[open]{-\infty}{\infty}$. By the intermediate value theorem, there exists a unique point $n^*\in\interval[open]{0}{n_{\max}}$ with $\beta'(n^*)=0$, with $\beta'$ changing sign either side of $n^*$. In fact, we can directly calculate $n^*$. We note first that $\beta'(n^*)=0$ implies $z_0(n^*)=1$. This saddle point satisfies Eq.~\eqref{eq:psi_derivative_equals_zero}, which can be solved for $n^*$, giving $n^*=\frac{\zeta'(1)}{\zeta(1)}$.
\end{proof}

\begin{proposition}[Boundary points of $\beta$]
We have $\beta(0)=\ln(a_0)$, $\beta(n_{\max})=\ln(a_{n_{\max}})$ for $n_{\max}<\infty$ and $\lim_{n\to\infty}\beta(n)=\infty$ for $n_{\max}=\infty$.
\end{proposition}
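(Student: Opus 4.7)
The plan is to reduce the three statements to separate analyses of the limit of $\beta(n)=\ln\zeta(z_0(n))-n\ln z_0(n)$ (from Eq.~\eqref{eq:solution-saddle}) at the endpoints of its domain, using the boundary behavior of $z_0(n)$ already established in Lemma~\ref{lem:boundary} and Proposition~\ref{prop:existence}: $z_0(n)\to 0$ as $n\to 0^+$; $z_0(n)\to\infty$ as $n\to n_{\max}^-$ when $n_{\max}<\infty$ (since $\zeta$ is then a polynomial and $R=\infty$); and $z_0(n)\to R\leq 1$ as $n\to\infty$ when $n_{\max}=\infty$. In each case I would extract just enough information from the saddle equation $z_0\zeta'(z_0)=n\zeta(z_0)$ to control the indeterminate term $n\ln z_0$.

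For $\beta(0)$, let $k_1=\min\{k\geq 1:a_k>0\}$, which exists by the assumptions listed in Appendix~\ref{app:assumptions}. A Taylor expansion $\zeta(z)=a_0+a_{k_1}z^{k_1}+O(z^{k_1+1})$ inserted into the saddle equation gives $n=(k_1 a_{k_1}/a_0)z_0^{k_1}+O(z_0^{k_1+1})$; inverting yields $z_0(n)\sim (a_0 n/(k_1 a_{k_1}))^{1/k_1}$ as $n\to 0^+$. Then $n\ln z_0\sim (n/k_1)\ln n\to 0$ and $\ln\zeta(z_0)\to\ln a_0$, so $\beta(0)=\ln a_0$.

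For $\beta(n_{\max})$ with finite $n_{\max}$, the generating function is a polynomial of degree $n_{\max}$ and $z_0\to\infty$. Factoring out the leading power,
\begin{equation*}
\ln\zeta(z_0)=\ln a_{n_{\max}}+n_{\max}\ln z_0+O(1/z_0),
\end{equation*}
so that $\beta(n)=\ln a_{n_{\max}}+(n_{\max}-n)\ln z_0+O(1/z_0)$. Closing this requires the decay rate of $n_{\max}-n$, which I would obtain from the saddle equation: with $k^\star=\max\{k<n_{\max}:a_k>0\}$, an expansion in $1/z_0$ gives $n_{\max}-n=(n_{\max}-k^\star)(a_{k^\star}/a_{n_{\max}})\,z_0^{k^\star-n_{\max}}+o(z_0^{k^\star-n_{\max}})$, hence $(n_{\max}-n)\ln z_0=O(\ln z_0/z_0^{n_{\max}-k^\star})\to 0$. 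This is the main delicate step of the proof, since two divergent pieces must cancel exactly to leave the constant $\ln a_{n_{\max}}$.

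For $n_{\max}=\infty$ I would split by the two subcases of Appendix~\ref{app:assumptions}. Since $z_0\in(0,R)$ with $R\leq 1$, both terms of $\beta(n)=\ln\zeta(z_0)+n|\ln z_0|$ are nonnegative. In case (c) with $R<1$ one has $|\ln z_0|\to|\ln R|>0$, so $n|\ln z_0|\to\infty$ immediately. In case (b) with $R=1$, the hypothesis $a_k\not\to 0$ combined with $a_k\in\amsbb{N}$ forces $a_k\geq 1$ for infinitely many $k$, so $\sum_k a_k=\infty$; by the monotone-convergence form of Abel's theorem, $\lim_{z\to 1^-}\zeta(z)=\infty$, hence $\ln\zeta(z_0)\to\infty$ as $n\to\infty$, and again $\beta(n)\to\infty$.
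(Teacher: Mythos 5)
Your proof is correct and takes essentially the same route as the paper's: evaluating $\beta(n)=\ln\zeta(z_0(n))-n\ln z_0(n)$ at the endpoints via the boundary behavior of $z_0(n)$ and the saddle equation. You merely make explicit the asymptotics the paper states briefly (the vanishing of $n\ln z_0$ at $n\to 0$, the cancellation $(n_{\max}-n)\ln z_0\to 0$ for finite $n_{\max}$, and the split between $R<1$ and $R=1$ when $n_{\max}=\infty$), which is a valid filling-in of the same argument.
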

\begin{proof}
From the definition of $\beta$ in Eq.~\eqref{eq:solution-saddle}, it is clear that $\beta(0)=\ln(a_0)$ since in the limit $n\to 0$, we have $\lim_{n\to 0}n \ln{z_0(n)}=0$. In the limit $n\to n_{\max}$ for finite $n_{\max}$,  we compute
\begin{eqnarray}
    \lim_{n\to n_{\max}}\beta(n)&=&\lim_{n\to n_{\max}} \ln\left(\frac{\sum_{k=0}^{n_{\max}}a_kz_0^k}{z_0^n}\right)\nonumber \\&=& \ln\left(\lim_{n\to n_{\max}}\frac{\sum_{k=0}^{n_{\max}}a_kz_0^k}{z_0^n}\right)\,.
\end{eqnarray}
If $n_{\max}<\infty$, the argument is just a rational function so that $\lim_{n\to n_{\max}}\beta(n)=\ln(a_{n_{\max}})$. If $n_{\max}=\infty$, we have $\lim_{n\to n_{\max}}z_0(n)=R$ with $\lim_{z\to R}\psi(z)=\infty$.
\end{proof}

\subsection{Relationship between $\alpha$ and $\beta$}
\label{app:alpha_beta_proof}
A key result of our analysis is that the function $\beta$ and its derivatives provide all the relevant information when studying average entanglement entropy up to $O(1)$ in $V$. This is due to the fact that the parameters $\alpha(n)$ and $\beta(n)$ in the asymptotics $d_N=\frac{\alpha(n)}{\sqrt{V}}e^{\beta(n)V}+o(1)$ are not independent, as the following proposition shows.

\begin{proposition}
Given the saddle point described in Eq.~\eqref{eq:saddle-asymp-sol} with $d_N=\frac{\alpha(n)}{\sqrt{V}}e^{\beta(n)V}$, we have the relation
\begin{equation}
    \alpha(n)=\sqrt{\frac{-\beta''(n)}{2\pi}}\,.\label{eq:alpha_relation_beta_1}
\end{equation}
\end{proposition}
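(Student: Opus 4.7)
The plan is to start from the explicit saddle-point expression for $d_N$ given in Eq.~\eqref{eq:saddle-asymp-sol}, which immediately identifies
\begin{align}
\alpha(n)=\frac{1}{z_0(n)\sqrt{2\pi\,\psi''[z_0(n)]}}\,,
\end{align}
and then rewrite $z_0\,\psi''(z_0)$ entirely in terms of $\beta''(n)$ using the chain rule applied to the implicit definition of $z_0(n)$. The formula already established in Eq.~\eqref{eq:beta_derivatives}, namely $\beta''(n)=-z_0'(n)/z_0(n)$, makes it clear that the quantity to be extracted is $z_0'(n)$.

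First I would differentiate the saddle point condition $Z(z_0(n))=n$ from Eq.~\eqref{eq:function_Z_definition} with respect to $n$, which by the inverse function theorem gives $z_0'(n)=1/Z'[z_0(n)]$. The key algebraic observation is that $Z'(z_0)$ and $\psi''(z_0)$ are essentially the same object up to a factor of $z_0$. Concretely, using the formula for $Z'(z)$ that appears in the proof of Lemma~\ref{lem:monotonic},
\begin{align}
Z'(z_0)=\frac{\zeta'(z_0)}{\zeta(z_0)}+z_0\,\frac{\zeta''(z_0)\zeta(z_0)-\zeta'(z_0)^2}{\zeta(z_0)^2}\,,
\end{align}
and computing $\psi''$ from $\psi(z)=\ln\zeta(z)-n\ln z$,
\begin{align}
\psi''(z_0)=\frac{\zeta''(z_0)\zeta(z_0)-\zeta'(z_0)^2}{\zeta(z_0)^2}+\frac{n}{z_0^2}\,,
\end{align}
one recognises that $z_0\psi''(z_0)=Z'(z_0)$ after applying the saddle-point equation $z_0\zeta'(z_0)/\zeta(z_0)=n$ to eliminate $\zeta'/\zeta$ in favour of $n/z_0$ in the first expression. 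This identity is the heart of the argument.

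Combining these gives $z_0'(n)=1/[z_0\,\psi''(z_0)]$, and then $\beta''(n)=-z_0'(n)/z_0(n)=-1/[z_0^2\,\psi''(z_0)]$, so that $z_0^2\,\psi''(z_0)=-1/\beta''(n)$. Plugging this into the expression for $\alpha$ yields
\begin{align}
\alpha(n)=\frac{1}{\sqrt{2\pi\, z_0^2\,\psi''(z_0)}}=\sqrt{\frac{-\beta''(n)}{2\pi}}\,,
\end{align}
which is the claimed identity. The square roots are well-defined because $\psi''(z_0)>0$ at the dominant saddle (so the Gaussian integral is real) and correspondingly $\beta''(n)<0$, as already shown in the concavity proposition. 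The main obstacle I anticipate is purely bookkeeping: writing $Z'(z_0)$ and $\psi''(z_0)$ in the same form and correctly applying the saddle-point relation $n=z_0\zeta'(z_0)/\zeta(z_0)$ so that the identification $z_0\psi''(z_0)=Z'(z_0)$ emerges cleanly; no additional analytic input beyond what is already established is required.
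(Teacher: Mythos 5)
Your proof is correct and follows essentially the same route as the paper's second version of the proof: both start from $\alpha(n)=\tfrac{1}{z_0\sqrt{2\pi\psi''(z_0)}}$, differentiate the saddle-point condition with respect to $n$ to obtain $z_0^2\psi''(z_0)=z_0/z_0'=-1/\beta''(n)$, and conclude via $\beta''(n)=-z_0'/z_0$ from Eq.~\eqref{eq:beta_derivatives}. Your phrasing through $Z'(z_0)=z_0\psi''(z_0)$ and the inverse function theorem is just a cleaner repackaging of the paper's elimination of $\zeta''(z_0)$, not a different argument.
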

\begin{proof}
We present two different versions of the proof highlighting two different perspectives.\\
\textbf{Proof (version 1).} Recall that we have the relation
\begin{widetext}
\begin{align}
    d_N=\sum^N_{N_A=0} d_Ad_B\quad\text{with}\quad \begin{array}{l}
    d_A=d_N(N_A,V_A)=\frac{\alpha(\frac{N_A}{V_A})}{\sqrt{V_A}}e^{\beta(\frac{N_A}{V_A}) V_A}=\frac{\alpha(\frac {n_A}{f})}{\sqrt{fV}}e^{\beta(\frac{n_A}{f})fV} \\
    d_B=d_N(N_B,V_B)=\frac{\alpha(\frac{N_B}{V_B})}{\sqrt{V_B}}e^{\beta(\frac{N_B}{V_B}) V_B}=\frac{\alpha(\frac{n-n_A}{1-f})}{\sqrt{(1-f)V}}e^{\beta(\frac{n-n_A}{1-f})(1-f)V}
    \end{array}\,,\label{eq:dN-consistency}
\end{align}
\end{widetext}
where $d_A$ and $d_B$ have the same functional form as $d_N$ and we used the relations $f=V_A/V$, $n_A=N_A/V$, $N_B=N-N_A$ and $V_B=V-V_A$. This yields the asymptotics
\begin{eqnarray}
    && d_Ad_B=\frac{\alpha(\frac{n_A}{f})\alpha(\frac{n-n_A}{1-f})}{\sqrt{f(1-f)}V}e^{V\lambda(n_A)},\quad\text{with}\nonumber \\&& \lambda(n_A)=f\beta\!\left(\frac{n_A}{f}\right)+(1-f)\beta\!\left(\frac{n-n_A}{1-f}\right).\quad
\end{eqnarray}
We can convert the sum into an integral $\sum^N_{N_A=0}\to V\int_0^1dn_A$, which will be dominated by the saddle point with saddle point equation
\begin{align}
    \lambda'(n_A)=\beta'\!\left(\frac{n_A}{f}\right)-\beta'\!\left(\frac{n-n_A}{1-f}\right)=0\,,
\end{align}
which has the unique solution $n_A=f n$. This yields
\begin{eqnarray}
    d_N&=&V\int_0^1\frac{\alpha(\frac{n_A}{f})\alpha(\frac{n-n_A}{1-f})}{\sqrt{f(1-f)}V}e^{V\lambda(n_A)}\dd{n_A}\nonumber\\&=&\frac{\alpha^2(n)}{\sqrt{f(1-f)}}\sqrt{\frac{2\pi}{-V\lambda''(fn)}}e^{V\lambda(fn)}+o(1).\qquad
\end{eqnarray}
Using $\lambda(fn)=\beta(n)$ and $\lambda''(fn)=\frac{\beta''(n)}{f(1-f)}$, setting $d_N=\frac{\alpha(n)}{\sqrt{V}} e^{\beta(n)V}+o(1)$ on the l.h.s., allows us to solve for $\alpha(n)$ yielding Eq.~\eqref{eq:alpha_relation_beta_1}. This result is thus a consequence of Eq.~\eqref{eq:dN-consistency}, which can be interpreted as a consistency relation when splitting a system into subsystems.\\
\textbf{Proof (version 2).} The saddle point approximation from Eq.~\eqref{eq:saddle-asymp-sol} yielded
\begin{align}
    \alpha(n)=\frac{1}{\sqrt{2\pi z_0^2\psi''(z)}}\,,
\end{align}
which we would like to relate to $\beta''(n)$ from Eq.~\eqref{eq:solution-saddle}. We can compute $\psi''(z_0)$ to give
\begin{align}
    \psi''(z)=\frac{n}{z_0^2}+\frac{\zeta''(z_0)}{\zeta(z_0)}-\left[\frac{\zeta'(z_0)}{\zeta(z_0)}\right]^2\,.\label{eq:doublederivativepsi}
\end{align}
To simplify expressions, we would like to get rid of the derivative terms $\zeta'(z_0)$ and $\zeta''(z_0)$. For this, we can use the saddle point Eq.~\eqref{eq:psi_derivative_equals_zero} and its derivative with respect to $n$ (recall: $z_0$ depends on $n$) to give
\begin{align}
    \zeta'(z_0)=\frac{n \zeta(z_0)}{z_0}\ \ \text{and}\ \ \zeta''(z_0)=\frac{\zeta(z_0)+(n-1)\zeta'(z_0)z_0'}{z_0 z_0'}\,,\label{eq:zeta-derivatives}
\end{align}
where we solved for $\zeta'(z_0)$ and $\zeta''(z_0)$, respectively. Plugging Eq.~\eqref{eq:zeta-derivatives} into Eq.~\eqref{eq:doublederivativepsi} yields $\psi''(z_0)=\frac{1}{z_0z_0'}$, which gives the desired result
\begin{align}
    \alpha(n)=\sqrt{\frac{1}{2\pi}\frac{z_0'}{z_0}}=\sqrt{\frac{-\beta''(n)}{2\pi}}\,,
\end{align}
where we used $\beta''(n)=-\frac{z'_0}{z_0}$ from Eq.~\eqref{eq:beta_derivatives}.
\end{proof}

\section{Resolution of Kronecker $\delta$s}
\label{app:resolution_kronecker_delta}

The Kronecker $\delta$ corrections in Eq.~\eqref{eq:entropy_N_asymptotic} are nonanalytic, but can be resolved using a double-scaling limit, meaning that we take the limit of two variables simultaneously. Of interest in our case are the limits $f\to1/2$ and $n\to n^*$. We will see that the correction depends on the scaling of the distances $f-1/2$ and $n-n^*$.

We consider Eq.~\eqref{eq:entropy_fixed_N_exact_sum}, where we have two nonanalytical points, due to the $\max$ and $\min$ functions. We consider the splitting 
\begin{align}
    \varphi_1&=-\min\left[\frac{d_A-1}{2d_B},\frac{d_B-1}{2d_A}\right]\,, \\ 
    \label{eq:varphi_2_definition}
    \varphi_2&=\Psi(d_N\!+\!1)-\Psi(\max[d_A,d_B]\!+\!1)\,,
\end{align}
such that $\varphi=\varphi_1+\varphi_2$. In particular, each of $\varphi_1$ and $\varphi_2$ contain nonanalytical functions, that when summed (integrated) over with $\varrho_{N_A}$, yield a Kronecker $\delta$. We refer to these functions as
\begin{align}
    x_1&=\min\left[\frac{d_A-1}{d_B},\frac{d_B-1}{d_A}\right]\,,\label{eq:x1_definition}\\
    \begin{split}
    \label{eq:x2_definition}
    x_2&=\ln(\frac{d_B}{d_A})[\Theta(n^*-n)\Theta(n_A-n_\text{crit}) \\ 
    &\phantom{=\ln(\frac{d_B}{d_A})}+\Theta(n-n^*)\Theta(n_\text{crit}-n_A)]\,,
    \end{split}
\end{align}
where $\Theta(x)$ is the Heaviside step function. While one easily sees that $\varphi_1=-x_1/2$, the relationship between $\varphi_2$ and $x_2$ is not as obvious, but is explained in the second section below.

Both nonanalytical points require that $d_A=d_B$, which can only occur for $f=1/2$ and $n_A=fn$ in the limit $V\to\infty$, as can be seen from setting the exponents for $d_A$ and $d_B$ equal in Eq.~\eqref{eq:general_saddle_dA_dB} leading to
\begin{align}
    \beta\!\left(\frac{n_A}{f}\right)fV=\beta\!\left(\frac{n-n_A}{1-f}\right)(1-f)V\,,
\end{align}
which is only satisfied for all $\beta$ and $n$ at $f=1/2$ and $n_A=fn$. We therefore expand the two nonanalytical terms around this point.

Because of the $\max$ and $\min$ functions, it is useful to determine which of $d_A$ or $d_B$ is larger for different regimes of $n_A$, $f$ and $n$. We consider $d_A/d_B\propto \exp[V(f\beta(\tfrac{n_A}{f})-(1-f)\beta(\tfrac{n-n_A}{1-f}))]$ by using Eq.~\eqref{eq:general_saddle_dA_dB}, where we shall define the factor in the exponent to be
\begin{equation}
    Y=f\beta(\frac{n_A}{f})-(1-f)\beta(\frac{n-n_A}{1-f})\,.\label{eq:def-Y}
\end{equation}
For large $V$, finding the larger dimension is equivalent to determining the sign of $Y$, that is to say
\begin{equation}
    d_A<d_B\iff Y<0\quad \text{and} \quad d_A>d_B\iff Y>0\,.
\end{equation}
Note that by the concavity of the function $\beta(n)$, Eq.~\eqref{eq:def-Y} has at most one root for $n_A$, which we shall call $n_\text{crit}$ if it exists.

Since we are integrating the $\max$ and $\min$ functions against a Gaussian, it is relevant to consider this near the mean of the Gaussian. Expanding Eq.~\eqref{eq:def-Y} around $\overline{n}_A=fn$, 
\begin{eqnarray}
\label{eq:Y_expansion_fn}
    Y&=&2\left(f-\tfrac{1}{2}\right)\beta(n)+2\beta'(n)(n_A-fn) \\&& + \frac{\left(f-\frac{1}{2}\right)\abs{\beta ''(n)}}{f(1-f)}\left(n_A-f n\right)^2 +O(n_A-fn)^3\,,\nonumber
\end{eqnarray}
where the absolute value is for notational convenience since $\beta''(n)$ is always negative. This expression will be useful in the following two sections when we split integrals into two regimes.

\subsection{Kronecker $\delta$ at $n=n^*$ and $f=\frac{1}{2}$}
\label{app:resolve_ratio_kronecker_delta}
We first consider the effect of $x_1=\min\left[\frac{d_A-1}{d_B},\frac{d_B-1}{d_A}\right]$ from Eq.~\eqref{eq:x1_definition} by defining
\begin{equation}
    X_1=\sum^N_{N_A=0}\varrho_{N_A}x_1=\int \varrho(n_A)x_1 dn_A+o(1)\,,
\end{equation}
recalling that $\varrho(n_A)$ is a Gaussian with mean $\bar{n}_A=fn$ and standard deviation $\sqrt{f(1-f)/(\abs{\beta''(n)}V)}$. We use Eq.~\eqref{eq:general_saddle_dA_dB} to expand the dimensions and find that the minimum may be reexpressed as
\begin{equation}
    x_1=\min\left[\frac{d_A}{d_B},\frac{d_B}{d_A}\right]+o(1)=\exp[-V\abs{Y}]+o(1)\,,
\end{equation}
where $Y$ is from Eq.~\eqref{eq:def-Y} and we ignored the square-root factors from Eq.~\eqref{eq:general_saddle_dA_dB}, since we are integrating this function against the Gaussian $\varrho(n_A)$, and the square-root factor is unity at the mean of the Gaussian.

In order for $X_1$ to not vanish in the limit of large $V$, we must require the constant and first order terms of $Y$ in Eq.~\eqref{eq:Y_expansion_fn} to vanish in this limit. This means we must have $f-1/2=O(1/V)$. The linear term vanishes only at $n^*$, which is defined as the value of $n$ such that $\beta'(n^*)=0$. Further expansion around $n^*$ yields
\begin{eqnarray}
    \label{eq:Y_expansion_nstar}
    Y&=&2\left(f-\tfrac{1}{2}\right)\beta(n^*)-2\abs{\beta''(n^*)}(n-n^*)(n_A-fn)\nonumber\\&&+O(n-n^*)^2.
\end{eqnarray}
The quadratic term in Eq.~\eqref{eq:Y_expansion_fn} may be ignored since it is only relevant if the first two terms vanish, which enforces $f-1/2=O(1/V)$. But then this term is already small compared to the quadratic term in the Gaussian which is of order $V$. We rewrite our integral as
\begin{eqnarray}
\label{eq:X1_integral_reduced}
    X_1&=&\int_{-\infty}^\infty \varrho(n_A) \exp \left[-2V\,\bigg|\!\left(f-\frac{1}{2}\right) \beta(n^*)\right.\\ && \left. \quad - \abs{\beta''(n^*)}(n-n^*)(n_A-fn) \bigg| \right]\dd{n_A}+o(1)\,.\nonumber
\end{eqnarray}
We shall define the point, at which the absolute value switches sign (equivalent to $d_A=d_B$) to be $n_\text{crit}$, whose expansions we compute as
\begin{equation}
\label{eq:ncrit_expansion_half_nstar}
    n_\text{crit}=fn+\frac{\left(f-\frac{1}{2}\right)\beta(n^*)}{\abs{\beta''(n^*)}(n-n^*)}\,.
\end{equation}
From here we need to distinguish between the cases $n-n^*<0$ and $n-n^*>0$. First we consider $n-n^*<0$. To deal with the absolute value, we must split the integral into two parts. One verifies that $n_A<n_\mathrm{crit}$ implies $Y<0$ using Eq.~\eqref{eq:Y_expansion_nstar}. With this, we arrive at the integral
\begin{widetext}
\begin{align}
    \begin{split}
    X_1&=\int_{-\infty}^{n_\text{crit}} \varrho(n_A)\exp\left[2V\left(f-\frac{1}{2}\right)\beta(n^*)-2V\abs{\beta''(n^*)}(n-n^*)(n_A-fn)\right]\dd{n_A} \\
    &\phantom{={}}+\int_{n_\text{crit}}^\infty \varrho(n_A)\exp\left[-2V\left(f-\frac{1}{2}\right)\beta(n^*)+2V\abs{\beta''(n^*)}(n-n^*)(n_A-fn)\right]\dd{n_A}\,,
    \end{split}
\end{align}
which we can evaluate as
\begin{eqnarray}
\label{eq:case1-X1}
    X_1\!\!&=&\!\!\frac{1}{2}\exp\!\left[\frac{(n-n^*)^2}{2}V\abs{\beta''(n^*)}\right]\!\left[\exp\!\left[2\left(f-\frac{1}{2}\right)V\beta(n^*)\right]\erfc\!\left(\sqrt{\frac{\abs{\beta''(n^*)}V}{2}}\frac{(n-n^*)^2\abs{\beta''(n^*)}+2\left(f-\frac{1}{2}\right)\beta(n^*)}{(n^*-n)\abs{\beta''(n^*)}}\right)\right.\nonumber \\
    &&\hspace{1.3cm}\left.+\exp\!\left[-2\left(f-\frac{1}{2}\right)V\beta(n^*)\right]\erfc\!\left(\sqrt{\frac{\abs{\beta''(n^*)}V}{2}}\frac{(n-n^*)^2\abs{\beta''(n^*)}-2\left(f-\frac{1}{2}\right)\beta(n^*)}{(n^*-n)\abs{\beta''(n^*)}}\right) \right]+o(1)\,.
\end{eqnarray}
For $n-n^*>0$, we have $n_A<n_\text{crit}$ implying $Y>0$. The integrals swap places, which yields the same result as Eq.~\eqref{eq:case1-X1} but with $n^*-n$ replaced by $n-n^*$. We can thus describe both cases in a single formula
\begin{eqnarray}
    \label{eq:resolved_Kronecker_delta}
    X_1\!\!&=&\!\!\frac{1}{2}\exp\!\left[\frac{(n-n^*)^2}{2}V\abs{\beta''(n^*)}\right]\!\left[\exp\!\left[2\left(f-\frac{1}{2}\right)V\beta(n^*)\right]\erfc\!\left(\sqrt{\frac{\abs{\beta''(n^*)}V}{2}}\frac{(n-n^*)^2\abs{\beta''(n^*)}+2\left(f-\frac{1}{2}\right)\beta(n^*)}{\abs{(n-n^*)\beta''(n^*)}}\right)\right.\nonumber \\
    &&\hspace{1.3cm}\left.+\exp\!\left[-2\left(f-\frac{1}{2}\right)V\beta(n^*)\right]\erfc\!\left(\sqrt{\frac{\abs{\beta''(n^*)}V}{2}}\frac{(n-n^*)^2\abs{\beta''(n^*)}-2\left(f-\frac{1}{2}\right)\beta(n^*)}{\abs{(n-n^*)\beta''(n^*)}}\right) \right]+o(1)\,,
\end{eqnarray}
which is valid for any $n-n^*\neq 0$.

\begin{figure*}
\centering
\begin{tikzpicture}
\draw (0,0) node{\includegraphics[width=.7\linewidth]{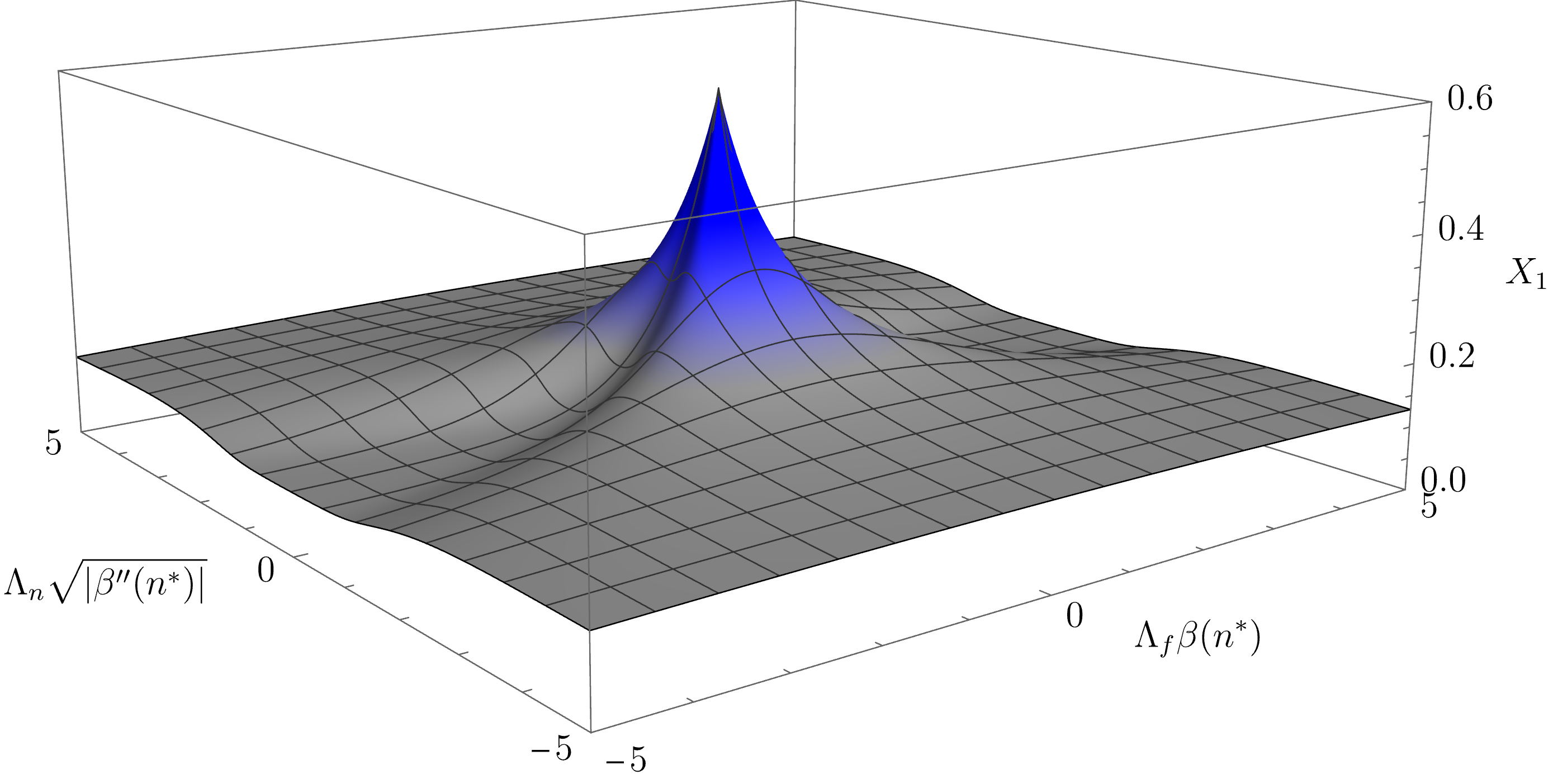}};
\end{tikzpicture}
\caption{\emph{Resolving the Kronecker $\delta$ at $f=1/2$ and $n=n^*$.} This is the case $s=1$ and $t=1/2$ from Eq.~\eqref{eq:X1-full-resolution}. We plot $X_1$ as a function of $\Lambda_f \beta(n^*)$ and $\Lambda_n\sqrt{|\beta''(n^*)|}$, which makes the functional form independent of the specifics of the system.}
\label{fig:X1-full-resolution}
\end{figure*}

Clearly, Eq.~\eqref{eq:resolved_Kronecker_delta} is nonzero only if we have simultaneously $\lim_{V\to\infty}n=n^*$ and $\lim_{V\to\infty}f=1/2$, as only in this double scaling limit (where also $n$ and $f$ have an implicit $V$ dependence) we will be near the Kronecker $\delta$. However, there are many ways to approach these limits, which is why we analyze different power laws
\begin{align}
        f=\frac{1}{2}+\frac{\Lambda_f}{V^s}\quad\text{and}\quad n=n^*+\frac{\Lambda_n}{V^t}\,,\label{eq:powerlaws-X1}
\end{align}
where $s>0$ and $t>0$ are the respective powers and $\Lambda_f$ and $\Lambda_n$ are free real parameters allowing us to map out the neighborhood around the Kronecker $\delta$ in this double scaling limit. Plugging Eq.~\eqref{eq:powerlaws-X1} into Eq.~\eqref{eq:resolved_Kronecker_delta} yields
\begin{eqnarray}
\label{eq:resolved-Kronecker-scaling}
    X_1&=&\frac{1}{2}\exp\left[\frac{\Lambda_n^2}{2}V^{1-2t}\abs{\beta''(n^*)}\right]\left[\exp\left[2\Lambda_fV^{1-s}\beta(n^*)\right]\erfc\left(\sqrt{\frac{\abs{\beta''(n^*)}}{2}}\left(\abs{\Lambda_n}V^{\frac{1}{2}-t}+\frac{2\Lambda_f}{\abs{\Lambda_n}}V^{\frac{1}{2}+t-s}\frac{\beta(n^*)}{\abs{\beta''(n^*)}}\right)\right)\right.\nonumber \\
    &&\hspace{1.5cm}\left.+\exp\left[-2\Lambda_fV^{1-s}\beta(n^*)\right]\erfc\left(\sqrt{\frac{\abs{\beta''(n^*)}}{2}}\left(\abs{\Lambda_n}V^{\frac{1}{2}-t}-\frac{2\Lambda_f}{\abs{\Lambda_n}}V^{\frac{1}{2}+t-s}\frac{\beta(n^*)}{\abs{\beta''(n^*)}}\right)\right) \right]+o(1)\,,
\end{eqnarray}
which can be simplified by considering different regimes for the power parameters $s$ and $t$. We find
\begin{align}\label{eq:X1-full-resolution}
    X_1=\begin{cases}
    0 & s<1\text{ or }t<\frac{1}{2}\\[1mm] \!\!
    {\Large\substack{\frac{1}{2}\exp\left[\frac{\Lambda_n^2}{2}\abs{\beta''(n^*)}\right]\left[\exp\left[2\Lambda_f\beta(n^*)\right]\erfc\left(\sqrt{\frac{\abs{\beta''(n^*)}}{2}}\frac{\Lambda_n^2\abs{\beta''(n^*)}+2\Lambda_f\beta(n^*)}{\abs{\Lambda_n\beta''(n^*)}}\right)\right.\\[1mm]
    \left.\hspace{2.6cm}+ \exp\left[-2\Lambda_f\beta(n^*)\right]\erfc\left(\sqrt{\frac{\abs{\beta''(n^*)}}{2}}\frac{\Lambda_n^2\abs{\beta''(n^*)}-2\Lambda_f\beta(n^*)}{\abs{\Lambda_n\beta''(n^*)}}\right)\right]}}  & s=1\text{ and }t=\frac{1}{2}\\[1mm]
    \exp\left[\frac{\Lambda_n^2}{2}\abs{\beta''(n^*)}\right]\erfc\left(\sqrt{\frac{\abs{\beta''(n^*)}}{2}}\abs{\Lambda_n}\right) & s>1\text{ and }t=\frac{1}{2}\\[1mm]
    \exp\left[-2\Lambda_f\beta(n^*)\right]  & s=1\text{ and }t>\frac{1}{2}\\[1mm]
    1 & s>1\text{ and }t>\frac{1}{2}
    \end{cases}
\end{align}
\end{widetext}
We note that the most interesting case corresponds to $s=1$ and $t=1/2$ shown in Fig.~\ref{fig:X1-full-resolution}, from which the other cases can be deduced by taking the appropriate limits $\Lambda_f\to \infty$ or $\Lambda_n\to \infty$ for $s<1$ or $t<1/2$, respectively, and $\Lambda_f\to 0$ or $\Lambda_n\to 0$ for $s>1$ or $t>1/2$, respectively. The function is mirror-symmetric with respect to both $\Lambda_f$ and $\Lambda_n$.

\subsection{Kronecker $\delta$ at $f=\frac{1}{2}$}
We now consider the effect of $\varphi_2=\ln(\min[\tfrac{d_N}{d_A},\tfrac{d_N}{d_B}])+o(1)$ from Eq.~\eqref{eq:varphi_2_definition} by studying
\begin{align}
    I&=\sum_{N_A}\varrho_{N_A}\varphi_2 \\
    &=\int_{-\infty}^\infty \varrho(n_A)\ln(\min\left[\frac{d_N}{d_A},\frac{d_N}{d_B}\right])\dd{n_A}+o(1)\,.
\end{align}
Recall that $n_\text{crit}$ is the point where $d_A(n_\text{crit})=d_B(n-n_\text{crit})$. One trick to evaluating this integral is to split it into two integrals around $n_\text{crit}$ to deal with the minimum function. Without loss of generality, we may restrict our analysis to $f<1/2$, as symmetry arguments will cover the case $f>1/2$.

To determine which dimension is larger in the splitting of the integral, we again study the sign of $Y$ as defined in Eq.~\eqref{eq:def-Y}. The argument is as follows:
\begin{itemize}
    \item At the mean of the Gaussian, Eq.~\eqref{eq:Y_expansion_fn} reduces to $Y\rvert_{n_A=fn}=2(f-1/2)\beta(n)$, so $Y<0 \implies d_A<d_B$.
    \item Using Eq.~\eqref{eq:ncrit_expansion_half_nstar}, we see that $n_\text{crit}>fn$ for $n<n^*$ and $n_\text{crit}<fn$ for $n>n^*$.
    \item The dimension inequality can only flip at $n_\text{crit}$. Put differently, one dimension is always larger for $n<n_\text{crit}$ and the other dimension is larger for $n>n_\text{crit}$.
    \item Thus we can conclude that when $n<n^*$, $d_A<d_B$ in the region $n_A<n_\text{crit}$. When $n>n^*$, $d_A>d_B$ in the region $n_A<n_\text{crit}$. The other dimension is larger for $n_A>n_\text{crit}$ in both cases.
\end{itemize}
The upshot of this analysis is that it enables us to write the integral for $n<n^*$ as
\begin{align}
    I=\int_{-\infty}^{n_\text{crit}}\!\! \varrho(n_A)\ln\!\left(\frac{d_N}{d_B}\right)\!\dd{n_A}+\int^{\infty}_{n_\text{crit}}\!\! \varrho(n_A)\ln\!\left(\frac{d_N}{d_A}\right)\!\dd{n_A},
\end{align}
and for $n>n^*$ as
\begin{align}
    I=\int_{-\infty}^{n_\text{crit}}\!\! \varrho(n_A)\ln\!\left(\frac{d_N}{d_A}\right)\!\dd{n_A}+\int^{\infty}_{n_\text{crit}}\!\! \varrho(n_A)\ln\!\left(\frac{d_N}{d_B}\right)\!\dd{n_A},
\end{align}
or, more compactly, as
\begin{widetext}
\begin{align}
    I&=\int_{-\infty}^\infty \varrho(n_A)\ln\left(\frac{d_N}{d_B}\right)\dd{n_A}+\begin{cases}
        \int^{\infty}_{n_\text{crit}} \varrho(n_A)\ln\left(\frac{d_B}{d_A}\right)\dd{n_A}\,, & n<n^* \\
        \int_{-\infty}^{n_\text{crit}} \varrho(n_A)\ln\left(\frac{d_B}{d_A}\right)\dd{n_A}\,, & n>n^*
    \end{cases} \\
    &=\int_{-\infty}^\infty \varrho(n_A)\ln\left(\frac{d_N}{d_B}\right)\dd{n_A}+\int_{-\infty}^\infty \varrho(n_A)x_2\dd{n_A} \,.
\end{align}
Here we see why it was useful to introduce $x_2$ in Eq.~\eqref{eq:x2_definition}. We can evaluate the first integral via the expansion
\begin{equation}
    \ln\left(\frac{d_N}{d_B}\right)=V\left[f\beta(n)+\beta'(n)(n_A-fn)-\frac{\beta''(n)}{2(1-f)}(n_A-fn)^2+O(n_A-fn)^3\right]+\frac{\ln(1-f)}{2}\,,
\end{equation}
\end{widetext}
and integrating against $\varrho(n_A)$ yields the leading-order and $O(1)$ term from Eq.~\eqref{eq:entropy_N_asymptotic}, which is given by the expression
\begin{equation}
    \int_{-\infty}^\infty\!\! \varrho(n_A)\ln\!\left(\frac{d_N}{d_B}\right)\!\dd{n_A}=Vf\beta(n)+\frac{f+\ln(1-f)}{2}+o(1)\,.
\end{equation}
The Kronecker $\delta$ is then encoded in the remaining integral
\begin{align}
    \begin{split}
    \label{eq:X2_definition}
    X_2&=\int_{-\infty}^\infty \varrho(n_A)x_2\dd{n_A} \\
    &=\begin{cases}
        \int^{\infty}_{n_\text{crit}} \varrho(n_A)\ln\left(\frac{d_B}{d_A}\right)\dd{n_A}\,, & n<n^* \\
        \int_{-\infty}^{n_\text{crit}} \varrho(n_A)\ln\left(\frac{d_B}{d_A}\right)\dd{n_A}\,, & n>n^*
    \end{cases}\,,
    \end{split}
\end{align}
where we need to take into account the scaling of the end-points. In particular, we require that $\delta n_\text{crit}= n_\text{crit}-fn=O(1/\sqrt{V})$, otherwise the exponential suppression of the Gaussian will cause the integral to vanish. Hence, from the definition of $n_\text{crit}$ that $d_A(n_\text{crit})=d_B(n-n_\text{crit})$, we expand to linear order and solve for $f-1/2$. We have essentially already studied this equation and we see that it is equivalent to
\begin{eqnarray}
    \label{eq:kronecker_log_f_scaling}
    &&\!\!\!\!Y\rvert_{n_A=n_\text{crit}}\!\!=2(f\!-\!\tfrac{1}{2})\beta(\frac{n_\text{crit}}{f})\!+\!2\beta'(\frac{n_\text{crit}}{f})(n_\text{crit}\!-\!fn)=0,\nonumber \\
    &&\!\!\!\!\text{and}\quad f-\frac{1}{2}=-\frac{\beta'(\frac{n_\text{crit}}{f})(n_\text{crit}-fn)}{\beta(\frac{n_\text{crit}}{f})}\,.
\end{eqnarray}
We need $f-1/2=O(1/\sqrt{V})$ in order for $X_2$ to not vanish. We ignored the quadratic term because it is retrospectively sub-leading to $O(1)$ and linear term, and would only contribute terms of order $o(1/\sqrt{V})$ to the r.h.s.~of Eq.~\eqref{eq:kronecker_log_f_scaling}. This yields an expression for $n_\text{crit}$, now for general $n$, valid only near $f=1/2$, as
\begin{align}
    n_\text{crit}=fn-\frac{\left(f-\frac{1}{2}\right)\beta(n)}{\beta'(n)}\ \text{and}\ \delta n_\text{crit}=-\frac{\left(f-\frac{1}{2}\right)\beta(n)}{\beta'(n)},
\end{align}
where we can expand to logarithm to find
\begin{eqnarray}
    &&\!\!\!\!\!\!\ln\left(\frac{d_B}{d_A}\right)=V\left[(1-f)\beta\left(\frac{n-n_A}{1-f}\right)-f\beta\left(\frac{n_A}{f}\right)\right] \\ &&\!\!\!\!\!\!=\!-2V\!\left[\!\left(f-\frac{1}{2}\right)\!\beta(n)\!+\!\beta'(n)(n_A-fn)\!+\!O(n_A-fn)^2\right]\!. \nonumber
\end{eqnarray}
Evaluating the integrals in Eq.~\eqref{eq:X2_definition} and after some algebra, we find
\begin{eqnarray}
   &&\!\!\!\!\!\!X_2=V|f-\tfrac{1}{2}|\beta(n)\erfc\left(\sqrt{2V\abs{\beta''(n)}}\frac{|f-\frac{1}{2}|\beta(n)}{\abs{\beta'(n)}}\right)\quad \nonumber \\ &&\!\!\!\!\!\! -\abs{\beta'(n)}\sqrt{\frac{V}{2\pi\abs{\beta''(n)}}}\exp\left[-2V\abs{\beta''(n)}\frac{\left(f-\frac{1}{2}\right)^2\beta(n)^2}{\abs{\beta'(n)}^2}\right],\nonumber \\ \label{eq:Kronecker-delta-resolved2}  
\end{eqnarray}
where the sign change arises from using the symmetry of the entanglement entropy, as $f-1/2$ will be nonpositive for $f\leq1/2$. Equation~\eqref{eq:Kronecker-delta-resolved2} resolves the Kronecker $\delta$ associated to the term of order $\sqrt{V}$. Note, however, that the first summand could have also different power laws depending on how $f-1/2$ scales with $V$, as we will see in a moment.

In analogy to Eq.~\eqref{eq:resolved-Kronecker-scaling}, we can plug a general power law scaling $f=1/2+\Lambda_f/V^s$ into Eq.~\eqref{eq:Kronecker-delta-resolved2} to find
\begin{eqnarray}\label{eq:resolved-Kronecker-scaling2}
    &&\!\!\!\! X_2=|\Lambda_f|V^{1-s}\beta(n)\erfc\left(\sqrt{2\abs{\beta''(n)}}\frac{|\Lambda_f|\beta(n)}{\abs{\beta'(n)}}V^{\frac{1}{2}-s}\right)\nonumber \\ &&\!\!\!\! -\abs{\beta'(n)}\sqrt{\frac{V}{2\pi\abs{\beta''(n)}}}\exp\left[-2\abs{\beta''(n)}\frac{\Lambda_f^2\beta(n)^2}{\abs{\beta'(n)}^2}V^{1-2s}\right]. \nonumber \\
\end{eqnarray}
Again, we can consider the various power laws to find
\begin{widetext}
\begin{align}
    X_2=\begin{cases}
        0 & s<\frac{1}{2}\\[1mm]
    \sqrt{V}\left(|\Lambda_f|\beta(n)\erfc\left(\sqrt{2\abs{\beta''(n)}}\frac{|\Lambda_f|\beta(n)}{\abs{\beta'(n)}}\right)-\abs{\beta'(n)}\sqrt{\frac{1}{2\pi\abs{\beta''(n)}}}\exp\left[-2\abs{\beta''(n)}\frac{\Lambda_f^2\beta(n)^2}{\abs{\beta'(n)}^2}\right]\right) & s=\frac{1}{2}\\[1mm]
    |\Lambda_f| V^{1-s}\beta(n)-\abs{\beta'(n)}\sqrt{\frac{V}{2\pi\abs{\beta''(n)}}} & \frac{1}{2}<s\leq1\\[1mm]
    -\abs{\beta'(n)}\sqrt{\frac{V}{2\pi\abs{\beta''(n)}}} & s> 1
    \end{cases}
\end{align}
\end{widetext}
where we ignore any terms of order $o(1)$. Here, $s=1/2$ is the most interesting case and, again, we can get the other limits by taking $\Lambda_f$ to zero or infinity. An interesting effect for $s>1/2$ is that the term $|\Lambda_f|V^{1-s}\beta(n)$ will exactly cancel the respective contribution from the leading order term $V\beta(n)\min(f,1-f)=V\beta(n)(\frac{1}{2}-\frac{|\Lambda_f|}{V^s})$, such that there will not be a term proportional to $V^{1-s}$ for $1/2<s<1$.

\section{Distinguishable particles}\label{app:case-study-distinguishable}
In this case study we retain our previous setup of a set of $V$ sites, among which we place $N$ particles. Let us assume that each site can hold an arbitrary number of particles. However, we now treat the particles as distinguishable, which means that it matters which particle is placed on which site. We label particles by elements of the set $U=\left\{ 1, 2, \dots, N\right\}$ and let $P\subseteq U$ represent a subset of particles.

Bipartitioning the system into subsystem $A$ and $B$ yields a Hilbert space of fixed particle number decomposed as
\begin{equation}
    \mathcal{H}^{(N)}=\bigoplus_{\abs{P}=0}^N \mathcal{H}_{A}^{(P)} \otimes \mathcal{H}_{B}^{(U\setminus P)},
\end{equation}
where the direct sum is over all possible subsets of particles $P$ containing $0\leq \abs{P}\leq N$ particles. Here, $\mathcal{H}_A^{(P)}$ denotes the Hilbert space describing the particles of $P$ to be in subsystem $A$. The remaining particles, $U\setminus P$, are then in subsystem $B$, described by the Hilbert space $\mathcal{H}_B^{(U\setminus P)}$.

In distributing $N_A$ distinguishable particles over $V_A$ sites, we first need to choose $N_A$ particles out of a total of $N$. This additional step introduces a binomial coefficient into the dimension, such that the Hilbert space dimension is
\begin{equation}
    \label{eq:dN_relation_dA_dB_distinguishable}
    d_N\!=\!\dim\mathcal{H}^{(N)}\!=\!\sum_{N_A=0}^N \binom{N}{N_A}\,d_A(N_A) d_B(N-N_A)\!=\!V^N\!.
\end{equation}
where we recognized that $d_N$ is the number of ways to place $N$ distinguishable particles over $V$ distinguishable sites, since from each particle's perspective there are $V$ sites to choose from (as there is no restriction on how many particles a site can hold). Similarly, we have $d_A(N_A)=V_A^{N_A}$ and $d_B(N_B)=(V-V_A)^{N_B}$.

The average entanglement entropy can be computed in analogy to Eq.~\eqref{eq:entropy_fixed_N_exact_sum}, containing an extra binomial factor, as
\begin{widetext}
\begin{equation}
\label{eq:entropy_distinguishable_relation_varrho_varphi}
    \braket{S_A}_N=\sum^{N}_{N_A=0}\underbrace{\binom{N}{N_A}\frac{d_Ad_B}{d_N}}_{\varrho_{N_A}}\underbrace{\left(\Psi(d_N+1)-\Psi(\max(d_A,d_B)+1)-\min(\tfrac{d_A-1}{2d_B},\tfrac{d_B-1}{2d_A})\right)}_{\varphi_{N_A}}\,,
\end{equation}
\end{widetext}
where we introduced the probability function $\varrho_{N_A}$ obeying $\sum_{N_A=0}^N\varrho_{N_A}=1$ and $\varphi_{N_A}$ in close analogy to Eq.~\eqref{eq:entropy_fixed_N_exact_sum}. Again, we evaluate this sum by approximating it as an integral in the quasi-continuous variable $n_A=N_A/V$ and identifying the saddle point of the density function $\varrho(n_A)=V \varrho_{Vn_A}$. We find that
\begin{equation}
    \label{eq:varrho_distinguishable}
    \varrho(n_A)=\frac{1}{\sqrt{2\pi f(1-f)n V}}\exp\left[-\frac{V}{2}\frac{(n_A-f n)^2}{f(1-f)n}\right]\,,
\end{equation}
which is simply the Gaussian approximation to the binomial distribution with a mean $\bar{n}_A=f n$.

The function $\varphi(n_A)=\varphi_{Vn_A}$ is a piecewise function with nonanalycity at the point $N_\text{crit}$, defined by the condition $d_A(N_\text{crit})=d_B(N-N_\text{crit})$. We can solve for it explicitly and find
\begin{align}
        N_\text{crit}\!=\!\frac{N\ln(V-V_A)}{\ln(V-V_A)+\ln(V_A)}\ \text{and}\ 
        n_\text{crit}\!=\!\frac{n\ln[(1-f)V]}{\ln[f(1-f)V^2]}.
\end{align}
Without loss of generality, we restrict to the case $f\leq \frac{1}{2}$, as the entanglement entropy is symmetric under $f\to 1-f$. With this assumption, we clearly have at $\bar{n}_A=f n$ the inequality $d_A(\bar{N}_A)=(fV)^{fnV}\leq ((1-f)V)^{(1-f)nV}=d_B(\bar{N}_B)$ with $\bar{N}_A=\bar{n}_AV$ and $\bar{N}_B=N-\bar{N}_A$. Therefore, at leading order it suffices to use $\varphi_{N_A}=\Psi(d_N+1)-\Psi(d_B+1)$ when evaluating the integral. We note that for $f<\frac{1}{2}$, we have $\delta n_{\mathrm{crit}}>0$, which implies that we can just integrate against $\varphi(n_A)$ for $n_A<n_{\mathrm{crit}}$. We can ignore the term $\min(\tfrac{d_A-1}{2d_B},\tfrac{d_B-1}{2d_A})$ inside $\varphi(n_A)$ as its integration against $\varrho(n_A)$ will be subleading of order $o(1)$. Therefore, our evaluation can use $\varphi(n_A)=\varphi_{n_A V}=\ln(d_N/d_B)+o(1)$, where we used the first-order approximation $\Psi(x)=\ln(x)+o(1)$. Plugging in the expressions for the appropriate dimensions from Eq.~\eqref{eq:dN_relation_dA_dB_distinguishable} and simplifying, we find
\begin{equation}
    \label{eq:varphi_distinguishable}
    \varphi(n_A)=n_AV\ln(V)-(n-n_A)V\ln(1-f)+o(1).
\end{equation}
If $f=\frac{1}{2}$, the nonanalycity of $\varphi_{N_A}$ at $n_\mathrm{crit}=\frac{n}{2}$ coincides with the peak of the Gaussian at $\bar{n}_A=\frac{n}{2}$, which means that we must break the integral approximation of Eq.~\eqref{eq:entropy_distinguishable_relation_varrho_varphi} into two integrals for $n_A<\frac{n}{2}$ and $n_A>\frac{n}{2}$. This yields a contribution of the order $\sqrt{V}\ln(V)$.

Finally, we combine Eqs.~\eqref{eq:varrho_distinguishable} and~\eqref{eq:varphi_distinguishable} to obtain the average entanglement entropy
\begin{eqnarray}
    \braket{S_A}_N&=&nf \,V\ln{V}-n(1-f)\ln(1-f)V \nonumber \\ &&+\sqrt{\frac{n}{2\pi}}\ln(2)\,\delta_{f,\frac{1}{2}}\sqrt{V}\ln{V}+o(1),
\end{eqnarray}
for $0\leq f\leq \frac{1}{2}$. One immediately sees that the leading order is not volume-law, it grows as $V\ln(V)$. It is still linear in $f$ leading to the typical Page curve triangle. In particular, comparing with Eq.~\eqref{eq:entropy_N_asymptotic}, there is no $\frac{f+\ln(1-f)}{2}$ term, which we showed to be universal for indistinguishable particle systems. We find a Kronecker $\delta$ at $f=\frac{1}{2}$ with a $\sqrt{V} \ln{V}$ prefactor so, compared to the case of indistinguishable particles, both the $V$ and the $\sqrt{V}$ terms are logarithmically enhanced. The Kronecker $\delta$ may be resolved at $f=\frac{1}{2}$ using the techniques outlined in Appendix~\ref{app:resolution_kronecker_delta}.

\bibliography{references}

\end{document}